\documentclass[11pt]{article}

\usepackage{amsmath,amssymb,amsfonts,amsthm,mathtools}
\usepackage[margin=1in]{geometry}
\usepackage{graphicx}
\usepackage{booktabs}
\usepackage{natbib}
\usepackage{hyperref}
\usepackage{enumitem}
\usepackage{array}
\usepackage{authblk}
\usepackage{xcolor}
\definecolor{royalblue}{rgb}{0.25, 0.41, 0.88}
\usepackage{hyperref}
\hypersetup{
    colorlinks= true,
    linkcolor= black,
    filecolor= cyan,      
    urlcolor= black, 
    citecolor= royalblue
}
\usepackage{tikz, subcaption}
\usetikzlibrary{shapes, arrows, positioning}

\newtheorem{assumption}{Assumption}
\newtheorem{definition}{Definition}
\newtheorem{proposition}{Proposition}
\newtheorem{theorem}{Theorem}
\newtheorem{remark}{Remark}
\newtheorem{lemma}{Lemma}

\newcommand{\E}{\mathbb{E}}
\newcommand{\Prob}{\mathbb{P}}
\newcommand{\Var}{\operatorname{var}}
\newcommand{\indep}{\perp\!\!\!\perp}
\newcommand{\one}{\mathbf{1}}
\newcommand{\TV}{\operatorname{TV}}
\newcommand{\dd}{\,\mathrm{d}}

\title{\bf Model-Robust Direct Effect Under Confounder--Mediator Ambiguity\vspace{5mm}}

\author[]{AmirEmad Ghassami\thanks{Email: \texttt{ghassami@bu.edu}}}

\affil[]{Department of Mathematics and Statistics, Boston University}
\date{}

\begin{document}

\maketitle

\begin{abstract}
Direct effect analyses usually require deciding whether a focal variable is a pre-exposure confounder or a post-exposure mediator. In observational studies, that distinction may be unclear because timing is measured coarsely or the variable reflects an evolving process. Considering the average treatment effect (ATE) and the natural direct effect (NDE) as a common notion of the direct effect when the focal variable is a confounder and a mediator, respectively, we show that, in general, no single observed-data estimand recovers both the ATE when the focal variable is a confounder and the NDE when it is a mediator. Consequently, if a practitioner applies an NDE estimator when the variable is actually pre-exposure, the resulting estimate may have no clear causal interpretation. We identify a no-additive-interaction condition under which these quantities coincide, develop sensitivity bounds for departures from that condition, and propose an alternative model-robust estimand. This estimand equals the ATE when the variable is pre-exposure and an interventional direct effect when it is post-exposure. Moreover, within a natural class of outcome-free stochastic direct effects, it is the unique observed-data functional that remains causally interpretable under both structural roles of the focal variable. We derive an efficient influence function and a doubly robust estimator, yielding robustness at two levels: the estimand is model-robust across the two causal scenarios, and the estimator is doubly robust with respect to nuisance estimation. In simulations and in an NHANES application on elevated PFAS burden, kidney function, and uric acid, mediation-based analyses yielded materially different reported estimates.

\end{abstract}

\bigskip
\noindent\textbf{Keywords:} Causal inference; direct effect; mediation analysis; model-robustness; semiparametric inference

\section{Introduction}

Direct effects are often used to ask whether an exposure changes an outcome through pathways that do not operate through a focal variable $W$. That question is central in many fields of science because it helps distinguish overall impact from pathway-specific impact and can guide mechanistic interpretation and intervention design \citep{robins1992identifiability,pearl2001direct,vanderweele2015explanation}. In many applications, the focal variable may be known to predict the outcome while its causal relationship with the exposure remains unclear. This can happen when exposure timing is coarse, when repeated processes are summarized by a single measurement, or when substantive experts disagree about whether the variable is antecedent to exposure or partly generated by it. Our motivating application concerns the direct effect of elevated per- and polyfluoroalkyl substance (PFAS) burden on uric acid relative to kidney function, summarized by estimated glomerular filtration rate (eGFR), in NHANES. Kidney function is structurally ambiguous in this setting. On the one hand, lower kidney function can increase measured serum PFAS concentrations through reduced renal clearance and is itself a strong determinant of serum uric acid, so eGFR may be viewed as a pre-exposure characteristic associated with both PFAS burden and uric acid \citep{kshirsagar2022environmental,dhingra2017reverse,moon2021pfas_kidney}. On the other hand, PFAS exposure has also been associated with lower kidney function, in which case eGFR may lie on the pathway from PFAS burden to uric acid \citep{lin2021pfas_kidney,niu2024kidney}. In such settings, a direct effect analysis faces an under-appreciated structural problem: the target parameter itself may change with the causal model adopted for the same observed data.

This difficulty is not merely semantic. If $W$ is a pre-exposure confounder, then there is no indirect pathway from the exposure $A$ to the outcome $Y$ through $W$. Therefore, the direct effect of $A$ on $Y$ relative to $W$ can be evaluated using parameters describing total causal effect, in particular the \emph{average treatment effect}. If $W$ is a post-exposure mediator, then the direct effect is usually defined through nested potential outcomes such as the \emph{natural direct effect}. These quantities generally average the same conditional exposure contrast over different distributions of $W$ and therefore need not agree. More importantly, if a practitioner applies a natural direct effect estimator when $W$ is actually pre-exposure, the estimator may converge to an observed-data functional that does not correspond to a causal direct effect. The resulting analysis may therefore be numerically stable yet substantively misleading.

In this paper, we formalize this challenge and show that it creates a genuine estimand problem. Our central contribution concerns model robustness, or identification robustness, rather than the more familiar estimator-level double robustness for a fixed estimand. First, we prove that there is generally no single observed-data functional that identifies the average treatment effect when $W$ is a confounder and the natural direct effect when $W$ is a mediator. Second, we show that those parameters coincide under a no-additive-interaction restriction and derive a sensitivity bound that quantifies departures from that restriction. Third, we propose a model-robust alternative estimand and show that it equals the average treatment effect under the confounder model and an interventional direct effect under the mediator model. Fourth, we show that this choice is canonical: within a natural class of outcome-free stochastic direct effects, the proposed parameter is the unique observed-data functional that coincides with the average treatment effect under the confounder model, and hence the unique member of this class that remains causally interpretable under both structural roles of $W$. Fifth, we derive a cross-fitted one-step estimator for this estimand and discuss hypothesis testing under structural ambiguity. Thus the proposed procedure is robust at two levels: the estimand is model-robust across the two causal scenarios, and the estimator is doubly robust with respect to nuisance estimation. Finally, we evaluate the method in a simulation study and in an NHANES application on elevated PFAS burden, kidney function, and uric acid. All proofs are provided in the Appendix.

\subsection{Related Work}
\label{sec:related}

The modern mediation literature was built on formal potential-outcome definitions of direct and indirect effects \citep{robins1992identifiability,pearl2001direct}. These ideas were developed further through identification theory, sensitivity analysis, and applied guidance in statistics  \citep{vanderweele2009conceptual,imai2010identification,vanderweele2015explanation}. In that literature, the natural direct effect has become a canonical estimand, but it typically relies on strong assumptions, including cross-world conditions or structural models that imply them \citep{imai2010identification,tchetgen2012semiparametric}. Recent reviews have also emphasized how demanding these assumptions can be in practice, especially when temporal ordering is unclear or only coarsely measured \citep{stuart2021assumptions,schuler2025practical}. Especially relevant here, \citet{georgeson2023sensitivity} study temporal bias in cross-sectional mediation and propose a sensitivity analysis for situations in which mediation is analyzed despite unresolved temporal ordering. Their focus differs from ours, however, because they address bias within mediation analysis once the variable is treated as a mediator, whereas we study ambiguity about whether the variable is pre-exposure or post-exposure in the first place.

A parallel literature introduced alternative direct and indirect effect definitions, such as interventional, stochastic, organic, and population-intervention effects, that remain meaningful when deterministic interventions on the mediator are difficult to interpret or when cross-world assumptions are undesirable \citep{hubbard2008population,geneletti2007identifying,vansteelandt2017interventional,diaz2020causal,hejazi2023nonparametric,lok2016defining}. Related work has studied population indirect effects under front-door-type conditions \citep{fulcher2020robust} and has examined the causal interpretation of randomized interventional analogues and when they coincide with or differ from natural effects \citep{miles2023causal,yu2024detecting}. These contributions clarify what can still be learned when natural effects are unavailable or unattractive and motivate interventional direct effects as meaningful causal alternatives. Our problem, however, is different from the usual motivation for these estimands. We do not assume that $W$ is definitely a mediator and then seek a weaker, more policy-relevant, or more easily identified direct effect. Instead, we allow ambiguity about whether $W$ is pre-exposure or post-exposure in the first place and ask whether a single observed-data functional can retain a direct-effect interpretation across those competing structural roles. To our knowledge, the literature has not explicitly characterized what can be learned from a single observed-data law when those two structural interpretations are both plausible, nor identified a canonical common target within a natural class of stochastic direct effects.

Our setting is also related to work on ambiguity about how intermediate variables should be handled in adjustment. \citet{inoue2020confounder} discussed this issue explicitly as a confounder-mediator dilemma for obesity/BMI in environmental epidemiology, emphasizing that in cross-sectional settings the same variable may plausibly be treated as either a confounder or a mediator. \citet{wang2017mistakenly} studied bias from mistakenly adjusting for a mediator when estimating a total effect. \citet{dyer2025} emphasized more generally that when confounder status is unclear, sensitivity analyses over variable inclusion are preferable to treating confounding as a purely descriptive notion. More closely related, \citet{takahashi2025mediating} studied variables that are partly mediator and partly confounder within the same dataset. These papers underscore the practical importance of structural ambiguity, but they do not characterize a common observed-data target that remains causally interpretable when the same variable may be either a confounder or a mediator.

Our paper also connects to work on identification-level robustness under a union of causal models. Related in spirit, but in a different setting, \citet{wang2018bounded} showed that in an instrumental variable problem the average treatment effect can be identified by the same average Wald functional under either of two no-interaction assumptions. Our contribution differs in that the ambiguity here is between confounding and mediation for an intermediate variable, and the goal is a direct effect that remains meaningful across those two competing causal scenarios.

Finally, the paper connects to semiparametric work on causal functionals and efficient estimation \citep{tsiatis2006semiparametric,bang2005doubly}. On the mediation side, \cite{tchetgen2012semiparametric} developed the influence function based estimation strategy, \cite{zheng2012targeted} developed targeted minimum loss-based estimation for the natural direct effect, and \citet{diaz2021nonparametric} and \citet{hejazi2023nonparametric} developed efficient machine learning-based estimators for interventional and stochastic interventional effects settings. We exploit the structure of our model-robust target to obtain an estimator that is easy to implement and does not require a different inferential strategy depending on whether $W$ is a mediator or a confounder. This estimation robustness is distinct from, and complements, the identification robustness that is the main focus of the paper.

\section{Model Description}
\label{sec:desc}

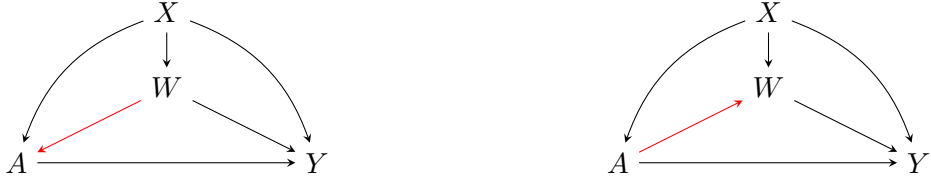
\begin{figure}[t]
\centering
		\tikzstyle{block} = [draw, circle, inner sep=2.5pt, fill=lightgray]
		\tikzstyle{input} = [coordinate]
		\tikzstyle{output} = [coordinate]
        \begin{tikzpicture}
            \tikzset{edge/.style = {->,> = latex'}}
            \node[] (a) at  (-2,0) {$A$};
            \node[] (x) at  (0,2) {$X$};
            \node[] (w) at  (0,1) {$W$};
            \node[] (y) at  (2,0) {$Y$};
            \node[] (a') at  (6,0) {$A$};
            \node[] (x') at  (8,2) {$X$};
            \node[] (w') at  (8,1) {$W$};
            \node[] (y') at  (10,0) {$Y$};                                   
            \draw[-stealth,red] (w) to (a);                                                         
            \draw[-stealth] (a) to (y);
            \draw[-stealth,bend left=-25] (x) to (a);                                                         
            \draw[-stealth] (x) to (w);
            \draw[-stealth,bend left=25] (x) to (y);
            \draw[-stealth] (w) to (y);
            \draw[-stealth,red] (a') to (w');                                                         
            \draw[-stealth] (a') to (y');
            \draw[-stealth,bend left=-25] (x') to (a');                                                         
            \draw[-stealth] (x') to (w');
            \draw[-stealth,bend left=25] (x') to (y');
            \draw[-stealth] (w') to (y');            
        \end{tikzpicture}
        \caption{Graphical representations of the causal models for the confounding (left) and mediation (right) scenarios.}
        \label{fig:MemMod}
\end{figure}

Let $O=(X,A,W,Y)$ denote the observed data, where $A\in\{0,1\}$ is the exposure, $Y$ is the outcome, $X$ denotes baseline covariates measured before $A$, and $W$ is a \emph{focal} variable measured before $Y$. We observe independent and identically distributed copies $O_1,\ldots,O_n\sim P$, where $P$ denotes the observed-data law. Throughout, $W$ is known to be causally relevant for $Y$, but its causal position relative to $A$ is ambiguous; see Figure \ref{fig:MemMod}.

We write
\[
Q(a,w,x):=\E_P(Y\mid A=a,W=w,X=x)
\]
and
\[
\Delta(w,x):=Q(1,w,x)-Q(0,w,x).
\]
The quantity $\Delta(w,x)$ is the observed conditional mean contrast for exposure at a fixed $(w,x)$. A central theme of the paper is that the same contrast $\Delta(w,x)$ is averaged over different distributions of $W$ depending on whether $W$ is treated as a confounder or as a mediator.

\begin{assumption}[Positivity]\label{ass:positivity}
For $P$-almost every $(x,w)$ in the support of $(X,W)$,
\[
0<\Prob(A=1\mid X=x,W=w)<1.
\]
Under the mediator model, we also assume
\[
0<\Prob(A=1\mid X=x)<1
\]
for $P$-almost every $x$.
\end{assumption}

\subsection{Confounder Model}

Under the confounder model, denoted $\mathcal{C}$, the variable $W$ is pre-exposure and may affect both $A$ and $Y$. Potential outcomes are denoted $Y^{(1)}$ and $Y^{(0)}$.

\begin{assumption}[Confounder model consistency and exchangeability]\label{ass:confounder}
Under $\mathcal{C}$:
\begin{enumerate}[label=(\alph*)]
\item if $A=a$, then $Y=Y^{(a)}$ almost surely;
\item for $a\in\{0,1\}$, $Y^{(a)}\indep A\mid (X,W)$.
\end{enumerate}
\end{assumption}

If $W$ is genuinely pre-exposure, then it cannot carry any of the effect of $A$ to $Y$. Relative to $W$, the direct effect is therefore the total effect, commonly summarized by the average treatment effect.

\begin{definition}[Average treatment effect]
The average treatment effect is defined as
\[
\theta_{\mathrm{ATE}}:=\E\!\left\{Y^{(1)}-Y^{(0)}\right\}.
\]
\end{definition}

\begin{proposition}
\label{prop:confounder-id}
Under Assumptions~\ref{ass:positivity} and \ref{ass:confounder},
\[
\theta_{\mathrm{ATE}}
=
\E_P\!\left\{\Delta(W,X)\right\}
=
\E_P\!\left[\int \Delta(w,X)\,\dd F_{W\mid X}(w\mid X)\right].
\]
\end{proposition}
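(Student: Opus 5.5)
The identity follows from the standard g-formula argument, applied conditionally on the joint covariate vector $(X,W)$. I will first establish, for each $a\in\{0,1\}$,
\[
\E\{Y^{(a)}\mid X,W\}=Q(a,W,X)\quad\text{almost surely}.
\]
The chain of equalities is
\[
\E\{Y^{(a)}\mid X,W\}=\E\{Y^{(a)}\mid A=a,X,W\}=\E\{Y\mid A=a,X,W\}.
\]
The first equality uses conditional exchangeability, Assumption~\ref{ass:confounder}(b). The second uses consistency, Assumption~\ref{ass:confounder}(a), on the event $\{A=a\}$. Assumption~\ref{ass:positivity} ensures that $\Prob(A=a\mid X=x,W=w)>0$ for $P$-almost every $(x,w)$. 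The conditioning event $\{A=a,X=x,W=w\}$ is therefore nonnegligible within almost every stratum, so $Q(a,w,x)$ is well defined $P$-almost everywhere on the support of $(X,W)$, which is where it gets evaluated.

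Next I will apply the tower property and linearity of expectation:
\[
\theta_{\mathrm{ATE}}=\E\bigl[\E\{Y^{(1)}\mid X,W\}-\E\{Y^{(0)}\mid X,W\}\bigr]=\E_P\{Q(1,W,X)-Q(0,W,X)\}=\E_P\{\Delta(W,X)\}.
\]
This gives the first claimed equality. The outer expectation is over the marginal law of $(X,W)$, which is observed, so the result is an observed-data functional. For the second equality I will iterate expectations once more, conditioning on $X$ alone and disintegrating the law of $(X,W)$ into $F_X$ and $F_{W\mid X}$:
\[
\E_P\{\Delta(W,X)\}=\E_P\bigl[\E_P\{\Delta(W,X)\mid X\}\bigr]=\E_P\Bigl[\int\Delta(w,X)\,\dd F_{W\mid X}(w\mid X)\Bigr].
\]

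The only step needing care is the measure-theoretic bookkeeping, and I do not expect a genuine obstacle. Two points require attention. First, all relevant expectations must exist, so I will assume $\E|Y^{(a)}|<\infty$ implicitly, as the definition of $\theta_{\mathrm{ATE}}$ already presumes. Second, the identity $\E\{Y^{(a)}\mid A=a,X,W\}=\E\{Y\mid A=a,X,W\}$ must hold as versions of conditional expectations. To handle this, I would write $Y\one\{A=a\}=Y^{(a)}\one\{A=a\}$ almost surely and divide by the propensity, which positivity makes legitimate. This yields $Q(a,W,X)=\E\{Y^{(a)}\one\{A=a\}\mid X,W\}/\Prob(A=a\mid X,W)$. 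Exchangeability then factorizes the numerator into $\E\{Y^{(a)}\mid X,W\}\,\Prob(A=a\mid X,W)$, and cancelling the propensity gives the claim.
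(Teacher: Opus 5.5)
Your proposal is correct and follows essentially the same route as the paper's proof: exchangeability given $(X,W)$, then consistency to obtain $Q(a,W,X)$, the tower property for the first equality, and one more iterated expectation over $F_{W\mid X}$ for the second. Your extra bookkeeping (writing $Y\one\{A=a\}=Y^{(a)}\one\{A=a\}$ and dividing by the positive propensity) simply makes rigorous the consistency step that the paper states directly.
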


When $W$ is a confounder, the causal effect of $A$ is identified after conditioning on $(X,W)$. The relevant averaging distribution is therefore the observed distribution of $W$ given $X$, because this is the covariate distribution in the target population.

\subsection{Mediator Model}

Under the mediator model, denoted $\mathcal{M}$, the variable $W$ is post-exposure. We write $W^{(a)}$ for the potential mediator under exposure level $a$, and $Y^{(a,w)}$ for the potential outcome if exposure were set to $a$ and the mediator were set to $w$. We also use the composition relation $Y^{(a)} = Y^{(a,W^{(a)})}$.

\begin{assumption}[Mediator model consistency and exchangeability]\label{ass:mediator}
Under $\mathcal{M}$:
\begin{enumerate}[label=(\alph*)]
\item if $A=a$, then $W=W^{(a)}$ almost surely;
\item if $A=a$ and $W=w$, then $Y=Y^{(a,w)}$ almost surely;
\item for all $a,a'\in\{0,1\}$ and all $w$,
\[
\bigl(Y^{(a,w)},W^{(a')}\bigr)\indep A\mid X;
\]
\item for all $a\in\{0,1\}$ and all $w$,
\[
Y^{(a,w)}\indep W\mid (A=a,X).
\]
\end{enumerate}
\end{assumption}

\begin{assumption}[Cross-world condition]\label{ass:crossworld}
For all $a,a'\in\{0,1\}$ and all $w$,
\[
Y^{(a,w)}\indep W^{(a')}\mid X.
\]
\end{assumption}

Assumption~\ref{ass:crossworld} is a standard sufficient condition for identifying natural direct effects.

\begin{definition}[Natural direct effect]
Under $\mathcal{M}$, the natural direct effect is
\[
\theta_{\mathrm{NDE}}
:=
\E\!\left\{Y^{(1,W^{(0)})}-Y^{(0,W^{(0)})}\right\}.
\]
\end{definition}

\begin{proposition}
\label{prop:nde-id}
Under Assumptions~\ref{ass:positivity}, \ref{ass:mediator}, and \ref{ass:crossworld},
\[
\theta_{\mathrm{NDE}}
=
\E_P\!\left[\int \Delta(w,X)\,\dd F_{W\mid A=0,X}(w\mid 0,X)\right].
\]
\end{proposition}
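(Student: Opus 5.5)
The argument is the standard mediation-formula derivation. I would establish the identity separately for the two mean potential outcomes $\E\{Y^{(a,W^{(0)})}\}$ with $a\in\{0,1\}$ and then take the difference. For each $a$, the approach is to condition on $X$ and on the cross-world mediator value $W^{(0)}$, and then replace every potential quantity by an observed one using Assumptions~\ref{ass:mediator} and \ref{ass:crossworld}.

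\textbf{Step 1 (iterated expectations).} Write
\[
\E\{Y^{(a,W^{(0)})}\}=\E\Bigl[\int \E\{Y^{(a,w)}\mid W^{(0)}=w,X\}\,\dd F_{W^{(0)}\mid X}(w\mid X)\Bigr].
\]
This step uses that, on the event $W^{(0)}=w$, we have $Y^{(a,W^{(0)})}=Y^{(a,w)}$.

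\textbf{Step 2 (remove the cross-world conditioning).} By Assumption~\ref{ass:crossworld}, $\E\{Y^{(a,w)}\mid W^{(0)}=w,X\}=\E\{Y^{(a,w)}\mid X\}$. Next, Assumption~\ref{ass:mediator}(c) gives $\E\{Y^{(a,w)}\mid X\}=\E\{Y^{(a,w)}\mid A=a,X\}$. Assumption~\ref{ass:mediator}(d) then gives $\E\{Y^{(a,w)}\mid A=a,X\}=\E\{Y^{(a,w)}\mid A=a,W=w,X\}$. Finally, consistency (b) turns this last quantity into $Q(a,w,X)$.

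\textbf{Step 3 (identify the mediator law).} Assumption~\ref{ass:mediator}(c) with $a'=0$ gives $W^{(0)}\indep A\mid X$. Combined with consistency (a), this yields $F_{W^{(0)}\mid X}(w\mid X)=F_{W\mid A=0,X}(w\mid 0,X)$. Subtracting the resulting expressions for $a=1$ and $a=0$ then gives $\E_P[\int \Delta(w,X)\,\dd F_{W\mid A=0,X}(w\mid 0,X)]$, which is the claim.

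\textbf{Main obstacle: Step 2.} The difficulty is in the measure-theoretic care needed to condition on events of the form $\{W^{(0)}=w\}$ and $\{W=w\}$ when $W$ is continuous. Assumption~\ref{ass:positivity} ensures that $Q(1,w,x)$ and $Q(0,w,x)$ are well defined for $F_{W\mid A=0,X}$-almost every $w$. The relevant point is that $F_{W\mid A=0,X}$ is absolutely continuous with respect to $F_{W\mid X}$, and that $\Prob(A=a\mid X,W)>0$ holds there. To make the conditioning rigorous, I would phrase the independence statements through regular conditional distributions, so that each displayed equality holds almost everywhere with respect to the integrating measure.
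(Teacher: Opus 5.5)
Your proposal is correct and follows essentially the same route as the paper: condition on $X$, use the cross-world independence to integrate $\E\{Y^{(a,w)}\mid X\}$ against $F_{W^{(0)}\mid X}$, reduce that to $Q(a,w,X)$ via Assumption~\ref{ass:mediator}(c), (d), (b), and identify $F_{W^{(0)}\mid X}$ with $F_{W\mid A=0,X}$ via (c) and (a). Your remarks on regular conditional distributions and on absolute continuity under positivity tighten measure-theoretic details that the paper leaves implicit.
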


The natural direct effect keeps the mediator at the distribution it would have taken under $A=0$. Consequently, the same conditional exposure contrast $\Delta(w,x)$ is averaged over $F_{W\mid A=0,X=x}$ rather than over $F_{W\mid X=x}$.

\subsection{No Common Identified Functional Under Confounder--Mediator Ambiguity}

Propositions~\ref{prop:confounder-id} and \ref{prop:nde-id} reveal a central difficulty: under the confounder model, the relevant averaging law is $F_{W\mid X}$. Under the mediator model, the natural direct effect averages over $F_{W\mid A=0,X}$. These distributions need not coincide. The next theorem shows that, in general, there is no single observed-data functional that identifies the direct effect under the confounder model in one causal scenario and the natural direct effect under the mediator model in the other.

\begin{theorem}
\label{thm:no-common}
There does not exist a map $T$ that assigns a real number to each observed-data law $P$ such that
\begin{enumerate}[label=(\roman*)]
\item $T(P)=\theta_{\mathrm{ATE}}$ for every observed-data law induced by a causal data-generating process satisfying Assumptions \ref{ass:positivity} and \ref{ass:confounder}; and
\item $T(P)=\theta_{\mathrm{NDE}}$ for every observed-data law induced by a causal data-generating process satisfying Assumptions \ref{ass:positivity}, \ref{ass:mediator}, and \ref{ass:crossworld}.
\end{enumerate}
\end{theorem}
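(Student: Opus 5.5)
The argument is by contradiction: we exhibit a single observed-data law $P^\ast$ that is induced both by a confounder-model process satisfying Assumptions~\ref{ass:positivity} and \ref{ass:confounder} and by a mediator-model process satisfying Assumptions~\ref{ass:positivity}, \ref{ass:mediator}, and \ref{ass:crossworld}. We then show that the two induced causal parameters differ at $P^\ast$. Any map $T$ would have to satisfy $T(P^\ast)=\theta_{\mathrm{ATE}}$ and $T(P^\ast)=\theta_{\mathrm{NDE}}$ simultaneously, which is impossible. By Propositions~\ref{prop:confounder-id} and \ref{prop:nde-id}, it is enough to find a law $P^\ast$ compatible with both models for which
\[
\E_P\!\left[\int \Delta(w,X)\,\dd F_{W\mid X}(w\mid X)\right]\neq \E_P\!\left[\int \Delta(w,X)\,\dd F_{W\mid A=0,X}(w\mid 0,X)\right].
\]

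For concreteness we drop $X$ (take it degenerate) and use binary $A,W$. Let $\Prob(A=1)=1/2$, $\Prob(W=1\mid A=0)=p_0$ and $\Prob(W=1\mid A=1)=p_1$ with $p_0\neq p_1$, both in $(0,1)$. Let $Y$ be Gaussian noise plus the mean $Q(a,w)=\beta a+\gamma w+\delta aw$ with $\delta\neq0$, so that $\Delta(w)=\beta+\delta w$.

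For the mediator model, generate $A\sim\mathrm{Bern}(1/2)$ and $W^{(a)}\sim\mathrm{Bern}(p_a)$ independently of $A$. Set $Y^{(a,w)}=\beta a+\gamma w+\delta aw+\varepsilon$, with $\varepsilon$ independent of $(A,W^{(0)},W^{(1)})$. Consistency, conditional exchangeability (c)--(d) and the cross-world independence all hold trivially by construction, and positivity holds since $0<p_a<1$ and $\Prob(A=1)=1/2$.

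For the confounder model, obtain the same joint law of $(A,W)$ by instead drawing $W\sim\mathrm{Bern}(\bar p)$ with $\bar p=(p_0+p_1)/2$, then $A\mid W$ by Bayes' rule. Set $Y^{(a)}=\beta a+\gamma W+\delta aW+\varepsilon$ with $\varepsilon$ independent. Then $Y^{(a)}\indep A\mid W$ holds, and positivity holds because $\Prob(A=1\mid W=w)\in(0,1)$. Both constructions induce the same law $P^\ast$ of $(A,W,Y)$.

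Finally, compare the two parameters. We have $\theta_{\mathrm{ATE}}=\beta+\delta\bar p$ and $\theta_{\mathrm{NDE}}=\beta+\delta p_0$. Their difference is $\delta(p_1-p_0)/2\neq0$, which gives the contradiction. This also previews why the no-additive-interaction condition ($\delta=0$) is what rescues agreement later.

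The main obstacle is not the computation but making sure the mediator-side construction genuinely satisfies all of Assumptions~\ref{ass:mediator} and \ref{ass:crossworld} with a valid joint distribution over cross-world counterfactuals. Taking all primitive noises mutually independent handles this. We must also check that both structural models reproduce exactly the same observed law, including the conditional law of $Y$ given $(A,W)$. Since $Y$ depends on $(A,W)$ through the same mean function plus an independent $\varepsilon$ in both constructions, this holds directly.
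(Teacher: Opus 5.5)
Your proof is correct and follows the paper's argument: both construct a confounder-model process and a mediator-model process, with binary $A,W$ and no $X$, that induce the same law of $(A,W,Y)$ but give $\theta_{\mathrm{ATE}}\neq\theta_{\mathrm{NDE}}$. The paper's example is the special case $\beta=\gamma=0$, $\delta=1$, $p_0=1/4$, $p_1=3/4$ with deterministic $Y=AW$, and your noisy parametric family additionally makes explicit that the gap $\delta(p_1-p_0)/2$ vanishes exactly when there is no interaction.
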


The proof of Theorem~\ref{thm:no-common} demonstrates that the \emph{same} observed-data law can arise from two different causal scenarios that imply different direct effect parameters.  Structural ambiguity therefore becomes estimand ambiguity.

\section{Coincidence of the Models Under No Additive Interaction}
\label{sec:no-interaction}

The impossibility in Theorem~\ref{thm:no-common} disappears under an effect-homogeneity restriction: if the conditional mean effect of $A$ on $Y$ does not depend on $W$ once $X$ is fixed, then the relevant parameters coincide.

\begin{assumption}[No additive interaction between $A$ and $W$]\label{ass:no-interaction}
There exists a measurable function $\delta(x)$ such that, for $P$-almost every $(w,x)$,
\[
\Delta(w,x)=\delta(x).
\]
\end{assumption}

A broad class of data-generating processes satisfying Assumption~\ref{ass:no-interaction} is obtained when the outcome equation is additive in $A$ on the mean scale. For example, under a mediator model, suppose
\[
A = h(X,U_A), \qquad
W = r(A,X,U_W), \qquad
Y = \eta(W,X,U_Y) + A\,\delta(X),
\]
where $h$, $r$, and $\eta$ are measurable functions, $\delta$ is the function in Assumption~\ref{ass:no-interaction}, and $U_A$, $U_W$, and $U_Y$ are mutually independent and independent of $X$. Then
\[
Q(a,w,x)=\E(Y\mid A=a,W=w,X=x)
       = \E\{\eta(w,x,U_Y)\}+a\,\delta(x),
\]
so
\[
\Delta(w,x)=Q(1,w,x)-Q(0,w,x)=\delta(x),
\]
which does not depend on $w$. Hence Assumption~\ref{ass:no-interaction} holds. The key feature is that $W$ may affect $Y$ through $\eta(W,X,U_Y)$, but the additive effect of $A$ on $Y$ does not vary with $W$. An analogous confounder-model example is obtained by instead taking $W=r(X,U_W)$ and $A=h(W,X,U_A)$, while keeping the same outcome equation for $Y$.

In what follows, we use the notation
\[
\psi(P):=\E_P\{\Delta(W,X)\}.
\]

\begin{theorem}
\label{thm:no-interaction}
Suppose Assumption~\ref{ass:no-interaction} holds. Then:
\begin{enumerate}[label=(\roman*)]
\item under Assumptions~\ref{ass:positivity} and \ref{ass:confounder},
\[
\psi(P)=\theta_{\mathrm{ATE}}=\E_P\!\left\{\delta(X)\right\};
\]
\item under Assumptions~\ref{ass:positivity}, \ref{ass:mediator}, and \ref{ass:crossworld},
\[
\psi(P)=\theta_{\mathrm{NDE}}=\E_P\!\left\{\delta(X)\right\}.
\]
\end{enumerate}
\end{theorem}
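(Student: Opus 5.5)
The plan is to reduce both parts to the identification formulas already available in Propositions~\ref{prop:confounder-id} and \ref{prop:nde-id}, and then substitute Assumption~\ref{ass:no-interaction} into them. The key observation is that both causal parameters average the same contrast $\Delta(w,X)$, only over different conditional laws of $W$. Once $\Delta(w,x)=\delta(x)$ does not depend on $w$, the choice of averaging law becomes irrelevant, provided each law integrates to one.

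For part (i), invoke Proposition~\ref{prop:confounder-id}, which gives $\theta_{\mathrm{ATE}}=\E_P\{\Delta(W,X)\}=\psi(P)$ directly. Then use Assumption~\ref{ass:no-interaction}: $\Delta(W,X)=\delta(X)$ holds $P$-almost surely, because the exceptional set of $(w,x)$ is $P$-null. Hence $\psi(P)=\E_P\{\delta(X)\}$.

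For part (ii), start from Proposition~\ref{prop:nde-id}:
\[
\theta_{\mathrm{NDE}}=\E_P\!\left[\int \Delta(w,X)\,\dd F_{W\mid A=0,X}(w\mid 0,X)\right].
\]
Replacing $\Delta(w,X)$ by $\delta(X)$ inside the integral gives $\E_P\{\delta(X)\}$, since the inner law is a probability measure. Separately, $\psi(P)=\E_P\{\delta(X)\}$ by the argument in part (i), which uses only Assumption~\ref{ass:no-interaction} and not the causal model. This yields $\psi(P)=\theta_{\mathrm{NDE}}$.

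The one step needing care is a null-set issue in the substitution inside the NDE integral. Assumption~\ref{ass:no-interaction} holds for $P$-almost every $(w,x)$ under the joint law of $(W,X)$. The integral, however, is over $F_{W\mid A=0,X}$, so we must check that the exceptional set $N=\{(w,x):\Delta(w,x)\neq\delta(x)\}$ also has measure zero under the law $P_X\otimes F_{W\mid A=0,X}$. To handle this, note that $P(A=0\mid X=x)>0$ for almost every $x$ by the mediator-model part of Assumption~\ref{ass:positivity}. The law of $(W,X)$ given $A=0$ is then absolutely continuous with respect to the joint law of $(W,X)$. Since $P\{(W,X)\in N\}=0$, we get
\[
\int \one\{(w,x)\in N\}\,\dd F_{W\mid A=0,X}(w\mid 0,x)=0
\]
for $P_X$-almost every $x$, which is what the substitution needs. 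Integrability of $\delta(X)$ follows from the integrability of $\Delta$, which is implicit in the propositions.
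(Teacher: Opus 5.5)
Your proof is correct and follows the paper's argument: substitute $\Delta(w,x)=\delta(x)$ into the identification formulas of Propositions~\ref{prop:confounder-id} and~\ref{prop:nde-id}, using that each averaging law is a probability measure. Your extra null-set check, which the paper leaves implicit, is sound; the condition $\pi(x)<1$ is really needed to pass from $P_{X\mid A=0}$-almost every $x$ to $P_X$-almost every $x$, and the cleanest way to state it is $F_{W\mid X}(\cdot\mid x)\ge\{1-\pi(x)\}F_{W\mid A=0,X}(\cdot\mid 0,x)$.
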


Intuitively, if $\Delta(w,x)$ does not vary with $w$, then it does not matter which distribution of $W$ is used for averaging. Theorem~\ref{thm:no-interaction} formalizes the idea that interaction between $A$ and $W$ is what makes direct effect definitions diverge on the additive mean scale.

\subsection{An Exact Decomposition and Sensitivity Bound}

Theorem~\ref{thm:no-interaction} suggests a useful way to quantify how far the natural direct effect may be from the parameter $\psi(P)$ when Assumption~\ref{ass:no-interaction} fails. Under the mediator model, write
\[
\pi(x):=\Prob(A=1\mid X=x).
\]
The next result gives an exact decomposition of the discrepancy and then a sensitivity bound.

For each $x$, define the oscillation of $\Delta(\cdot,x)$ by
\[
\operatorname{rng}_{\Delta}(x)
:=
\operatorname*{ess\,sup}_{w}\Delta(w,x)
-
\operatorname*{ess\,inf}_{w}\Delta(w,x),
\]
and define the conditional total-variation distance
\[
\TV(x)
:=
\sup_{B\subseteq\mathcal{W}}
\left|
F_{W\mid X}(B\mid x)-F_{W\mid A=0,X}(B\mid 0,x)
\right|,
\]
where $\mathcal{W}$ denotes the support of $W$.

\begin{proposition}
\label{prop:sensitivity}
Under Assumptions~\ref{ass:positivity}, \ref{ass:mediator}, and \ref{ass:crossworld},
\begin{equation}
\label{eq:gap-conditional}
\psi(P)-\theta_{\mathrm{NDE}}
=
\E_P\!\left[
\pi(X)\Bigl(
\E_P\{\Delta(W,X)\mid A=1,X\}
-
\E_P\{\Delta(W,X)\mid A=0,X\}
\Bigr)
\right].
\end{equation}
Moreover,
\begin{equation}
\label{eq:sens}
\left|\theta_{\mathrm{NDE}}-\psi(P)\right|
\le
\E_P\!\left\{\operatorname{rng}_{\Delta}(X)\,\TV(X)\right\}.
\end{equation}
In particular, if $\operatorname{rng}_{\Delta}(X)\le \Gamma$ almost surely for some $\Gamma\ge 0$, then
\[
\theta_{\mathrm{NDE}}
\in
\left[
\psi(P)-\Gamma\,\E_P\{\TV(X)\},
\;
\psi(P)+\Gamma\,\E_P\{\TV(X)\}
\right].
\]
If $W$ is binary, then \eqref{eq:gap-conditional} simplifies to
\[
\psi(P)-\theta_{\mathrm{NDE}}
=
\E_P\!\left[
\bigl\{\Delta(1,X)-\Delta(0,X)\bigr\}
\Bigl\{
\Prob(W=1\mid X)-\Prob(W=1\mid A=0,X)
\Bigr\}
\right].
\]
\end{proposition}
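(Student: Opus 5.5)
The plan is to invoke Proposition~\ref{prop:nde-id} so that $\theta_{\mathrm{NDE}}$ becomes an observed-data functional, and then compare it with $\psi(P)$ conditionally on $X$. By the tower property,
\[
\psi(P)=\E_P\!\left[\E_P\{\Delta(W,X)\mid X\}\right].
\]
Because $A$ is binary, the law of total probability gives
\[
\E_P\{\Delta(W,X)\mid X\}=\pi(X)\,m_1(X)+\{1-\pi(X)\}\,m_0(X),
\]
where $m_a(X):=\E_P\{\Delta(W,X)\mid A=a,X\}$. Proposition~\ref{prop:nde-id} says $\theta_{\mathrm{NDE}}=\E_P\{m_0(X)\}$, since integrating $\Delta(w,X)$ against $F_{W\mid A=0,X}$ is exactly $m_0(X)$. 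Subtracting,
\[
\E_P\{\Delta(W,X)\mid X\}-m_0(X)=\pi(X)\{m_1(X)-m_0(X)\}.
\]
Taking outer expectations yields \eqref{eq:gap-conditional}. Positivity (Assumption~\ref{ass:positivity}) makes $\pi(X)\in(0,1)$, so both conditional expectations are well defined.

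For the bound \eqref{eq:sens}, I would return to the form
\[
\psi(P)-\theta_{\mathrm{NDE}}=\E_P\!\left[\int \Delta(w,X)\,\dd\{F_{W\mid X}-F_{W\mid A=0,X}\}(w\mid X)\right].
\]
Write $\mu_X$ for the signed measure $F_{W\mid X}-F_{W\mid A=0,X}$. It has total mass zero, so for any constant $c(x)$ we have $\int \Delta\,\dd\mu_x=\int\{\Delta(w,x)-c(x)\}\,\dd\mu_x$. Take $c(x)$ to be the midpoint of the essential range of $\Delta(\cdot,x)$. Then $|\Delta-c|\le \operatorname{rng}_\Delta(x)/2$ almost everywhere, and $|\mu_x|(\mathcal W)=2\,\TV(x)$. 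Together these give $|\int\Delta\,\dd\mu_x|\le \operatorname{rng}_\Delta(x)\TV(x)$. Integrating over $X$ with Jensen's inequality, $|\E(\cdot)|\le\E|\cdot|$, gives \eqref{eq:sens}. The interval statement then follows by bounding $\operatorname{rng}_\Delta(X)\le\Gamma$ inside the expectation.

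The step needing the most care is this measure-theoretic one. The essential sup/inf must be taken with respect to a dominating measure for both $F_{W\mid X=x}$ and $F_{W\mid A=0,X=x}$, so that "almost everywhere" holds under both. This follows because $F_{W\mid A=0,X}$ is absolutely continuous with respect to $F_{W\mid X}$, as $F_{W\mid X}=\pi F_{W\mid A=1,X}+(1-\pi)F_{W\mid A=0,X}$ with $1-\pi>0$. We also need $\operatorname{rng}_\Delta$ finite, or else the bound is trivial, and measurability in $x$, which I would take as implicit in the statement.

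For binary $W$, $\mu_x$ is supported on $\{0,1\}$ with $\mu_x(\{1\})=\Prob(W=1\mid X)-\Prob(W=1\mid A=0,X)=-\mu_x(\{0\})$. Hence
\[
\int\Delta\,\dd\mu_x=\{\Delta(1,x)-\Delta(0,x)\}\,\mu_x(\{1\}),
\]
and taking expectations gives the displayed simplification.
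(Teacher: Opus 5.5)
Your proposal is correct and follows the paper's proof essentially step for step. You use the mixture identity $F_{W\mid X}=\pi F_{W\mid A=1,X}+(1-\pi)F_{W\mid A=0,X}$ together with Proposition~\ref{prop:nde-id} for \eqref{eq:gap-conditional}, the total-variation inequality for \eqref{eq:sens} (which you make explicit by centering at the midpoint of the essential range and using $|\mu_x|(\mathcal W)=2\,\TV(x)$), and the same two-point computation for binary $W$. Your observation that $F_{W\mid A=0,X}\ll F_{W\mid X}$ because $\pi(X)<1$, so the essential range controls $\Delta$ under both laws, fills in a detail the paper leaves implicit.
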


Equation~\eqref{eq:gap-conditional} shows that the discrepancy is driven by treatment-induced changes in the conditional distribution of $W$ given $X$, and only to the extent that those changes alter the conditional average of $\Delta(w,X)$. The bound in \eqref{eq:sens} has two ingredients. The term $\operatorname{rng}_{\Delta}(X)$ measures how strongly the conditional effect of $A$ varies with $W$, and $\TV(X)$ measures how different the two relevant conditional distributions of $W$ are. If either ingredient is small, then the natural direct effect and the parameter $\psi(P)$ must be close.

\section{A Model-Robust Direct Effect}
\label{sec:psi}

We now define a direct effect parameter that remains meaningful under both causal scenarios. Let $G$ be a \emph{stochastic intervention} variable such that, conditional on $X=x$, $G$ has distribution $F_{W\mid X}(\cdot\mid x)$ and is independent of the potential outcomes $\{Y^{(a,w)}:a\in\{0,1\},w\in\mathcal{W}\}$ given $X=x$.

\begin{definition}[Interventional direct effect]
Under the mediator model, the interventional direct effect is defined as
\[
\theta_{\mathrm{IDE}}
:=
\E\!\left\{Y^{(1,G)}-Y^{(0,G)}\right\}.
\]
\end{definition}

This parameter compares the two exposure levels while drawing the mediator from its \emph{observed} conditional distribution given $X$. It is an \emph{interventional} direct effect because it references a stochastic intervention on the mediator distribution rather than the nested counterfactual $W^{(0)}$.

\begin{theorem}[The model-robust direct effect]
\label{thm:psi}
Recall the definition
\[
\psi(P):=\E_P\!\left\{\Delta(W,X)\right\}.
\]
\begin{enumerate}[label=(\roman*)]
\item under the confounder model $\mathcal{C}$ and Assumptions~\ref{ass:positivity} and \ref{ass:confounder},
\[
\psi(P)=\theta_{\mathrm{ATE}}=\E\!\left\{Y^{(1)}-Y^{(0)}\right\};
\]
\item under the mediator model $\mathcal{M}$ and Assumptions~\ref{ass:positivity} and \ref{ass:mediator},
\[
\psi(P)=\theta_{\mathrm{IDE}}=\E\!\left\{Y^{(1,G)}-Y^{(0,G)}\right\}.
\]
\end{enumerate}
\end{theorem}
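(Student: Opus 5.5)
Part (i) follows from Proposition~\ref{prop:confounder-id}, which already shows $\theta_{\mathrm{ATE}}=\E_P\{\Delta(W,X)\}$ under Assumptions~\ref{ass:positivity} and \ref{ass:confounder}; this is $\psi(P)$ by definition. For completeness, I would recall the argument. By iterated expectations, $\E\{Y^{(a)}\}=\E[\E\{Y^{(a)}\mid X,W\}]$. Exchangeability (Assumption~\ref{ass:confounder}(b)) gives $\E\{Y^{(a)}\mid X,W\}=\E\{Y^{(a)}\mid A=a,X,W\}$. Consistency (Assumption~\ref{ass:confounder}(a)) turns this into $Q(a,W,X)$, and positivity makes the conditioning well defined $P$-almost surely. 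Taking the difference over $a$ yields $\psi(P)$.

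For part (ii), the plan is to compute $\E\{Y^{(a,G)}\}$ for each $a\in\{0,1\}$ and show that it equals $\E_P\bigl[\int Q(a,w,X)\,\dd F_{W\mid X}(w\mid X)\bigr]$. Differencing over $a$ then gives $\E_P[\int\Delta(w,X)\,\dd F_{W\mid X}(w\mid X)]=\E_P\{\Delta(W,X)\}=\psi(P)$. The computation proceeds in three steps.
\begin{enumerate}[label=(\arabic*)]
\item Condition on $X$ and $G$. Since $G\indep\{Y^{(a,w)}\}\mid X$ and $G\mid X=x\sim F_{W\mid X}(\cdot\mid x)$, we get $\E\{Y^{(a,G)}\mid X=x\}=\int \E\{Y^{(a,w)}\mid X=x\}\,\dd F_{W\mid X}(w\mid x)$.
\item Identify $\E\{Y^{(a,w)}\mid X=x\}$ for each fixed $w$:
\[
\E\{Y^{(a,w)}\mid X=x\}=\E\{Y^{(a,w)}\mid A=a,X=x\}=\E\{Y^{(a,w)}\mid A=a,W=w,X=x\}=Q(a,w,x).
\]
The first equality uses Assumption~\ref{ass:mediator}(c), the second uses Assumption~\ref{ass:mediator}(d), and the third uses consistency, Assumption~\ref{ass:mediator}(b).
\item Substitute the result of step (2) into step (1) and take the outer expectation over $X$.
\end{enumerate}
No cross-world assumption is needed, because the mediator is drawn from an externally specified law rather than from $W^{(0)}$.

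The main obstacle is step (2): evaluating the conditional mean at $W=w$ requires that $Q(a,w,x)$ be well defined for $F_{W\mid X}(\cdot\mid x)$-almost every $w$, conditional on $A=a$. In other words, $F_{W\mid X=x}$ must be absolutely continuous with respect to $F_{W\mid A=a,X=x}$ for both $a$. To establish this, I would write
\[
F_{W\mid X}(\dd w\mid x)=\sum_{a'}\Prob(A=a'\mid X=x)\,F_{W\mid A=a',X}(\dd w\mid a',x).
\]
The first part of Assumption~\ref{ass:positivity} gives $\Prob(A=a\mid W=w,X=x)>0$ almost everywhere, and Bayes' rule then yields the required absolute continuity for each $a$. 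This justifies the integral representation, and the remaining steps are routine applications of Fubini's theorem and iterated expectations.
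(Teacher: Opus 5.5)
Your proposal is correct and follows the paper's proof essentially step for step. Part (i) is read off from Proposition~\ref{prop:confounder-id}. Part (ii) integrates $\E\{Y^{(a,w)}\mid X\}=Q(a,w,X)$ against $F_{W\mid X}$, using the conditional independence of $G$, and you obtain that identity from Assumption~\ref{ass:mediator}(c), (d), (b) exactly as in the proof of Proposition~\ref{prop:nde-id}. Your extra check, that positivity and Bayes' rule give $F_{W\mid X=x}\ll F_{W\mid A=a,X=x}$ so that $Q(a,\cdot,x)$ is well defined where it is integrated, is a valid refinement that the paper leaves implicit.
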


In view of Theorem~\ref{thm:psi}, we refer to the parameter $\psi(P)$ as the \emph{model-robust direct effect}. Note that we do \emph{not} require the no-interaction assumption (Assumption~\ref{ass:no-interaction}) for $\psi(P)$ to be model-robust.

Theorem~\ref{thm:psi} uses the same observed-data functional under both models, but its causal interpretation changes with the structural role of $W$, while both interpretations retain a direct effect meaning. If $W$ is pre-exposure, $\psi(P)$ is simply the total effect because there is no mediated pathway through $W$. If $W$ is post-exposure, $\psi(P)$ is an interventional direct effect obtained by averaging over the observed conditional distribution of $W$ given $X$. This is why $\psi(P)$ is model-robust: the statistical target does not depend on whether the arrow between $A$ and $W$ points forward or backward. This kind of robustness is distinct from the usual estimator-level double robustness of augmented inverse-probability weighting and related semiparametric estimators \citep{bang2005doubly}; rather, it is closer to identification under a union of causal models. A close example is \citet{wang2018bounded}, who showed in an instrumental variable setting that the average treatment effect is identified by the same average Wald functional under either of two distinct no-interaction assumptions. To the best of our knowledge, the present result appears to be the first to obtain identification-level robustness specifically for ambiguity about whether a variable is a confounder or a mediator.

\paragraph{Practical attractiveness of $\psi(P)$.}

The parameter $\psi(P)$ has at least three practical advantages. First, it is identified by the same observed-data functional regardless of whether $W$ is a confounder or a mediator. A practitioner does not need to commit to a fragile structural choice before estimating the target parameter. Second, $\psi(P)$ avoids cross-world counterfactuals. The natural direct effect requires the nested quantity $Y^{(1,W^{(0)})}$, whereas $\psi(P)$ can be interpreted under the mediator model entirely through a stochastic intervention. Third, $\psi(P)$ is easy to estimate: influence function and doubly robust techniques can be used directly, as described in Subsection \ref{sec:IF}.

\subsection{Uniqueness}

Theorem~\ref{thm:psi} shows that $\psi(P)$ remains causally interpretable under both structural roles of $W$. The next result shows that, within a natural class of stochastic direct effects, this choice is also unique.

For any rule $P\mapsto \nu_P(\cdot\mid x)$ assigning to each observed-data law $P$ a conditional distribution on $W$ given $X=x$, define
\[
\Psi_\nu(P)
:=
\E_P\!\left[\int \Delta(w,X)\,\dd \nu_P(w\mid X)\right].
\]
We call the rule \emph{outcome-free} if it depends on $P$ only through the marginal law of $(X,A,W)$; equivalently, if $P_{XAW}=P'_{XAW}$, then $\nu_P(\cdot\mid x)=\nu_{P'}(\cdot\mid x)$ for $P_X$-almost every $x$. Under the mediator model, $\Psi_\nu(P)$ is the stochastic direct effect obtained by drawing the mediator from $\nu_P(\cdot\mid X)$ independently of the potential outcomes given $X$, by the same conditioning argument used in the proof of Theorem~\ref{thm:psi}.

\begin{theorem}[Uniqueness of $\psi(P)$]
\label{thm:psi-unique}
Let $P\mapsto \nu_P(\cdot\mid x)$ be outcome-free, and let $\Psi_\nu(P)$ be defined above. If
\[
\Psi_\nu(P)=\theta_{\mathrm{ATE}}
\]
for every confounder model law $P$ satisfying Assumptions~\ref{ass:positivity} and~\ref{ass:confounder}, then, for every observed-data law $P$ satisfying Assumption~\ref{ass:positivity},
\[
\nu_P(\cdot\mid X)=F_{W\mid X}(\cdot\mid X)
\qquad P\text{-almost surely}.
\]
Consequently,
\[
\Psi_\nu(P)=\E_P\!\left\{\Delta(W,X)\right\}=\psi(P).
\]
Hence $\psi(P)$ is the unique functional of the form
\[
P\mapsto \E_P\!\left[\int \Delta(w,X)\,\dd \nu_P(w\mid X)\right],
\]
with $\nu_P$ outcome-free, that coincides with the average treatment effect under the confounder model.
\end{theorem}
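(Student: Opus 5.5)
The plan is to exploit outcome-freeness. We fix an observed-data law $P$ satisfying Assumption~\ref{ass:positivity} and keep its marginal $P_{XAW}$. We then vary only the conditional law of $Y$ given $(X,A,W)$. Every such modified law $P'$ can be realized by a confounder-model data-generating process: draw $(X,W,A)$ from $P_{XAW}$, treating $W$ as pre-exposure, and set $Y^{(a)}$ drawn from the chosen conditional law of $Y$ given $(A=a,W,X)$, independently of $A$ given $(X,W)$. Assumptions~\ref{ass:positivity} and \ref{ass:confounder} then hold. Because the rule is outcome-free, $\nu_{P'}=\nu_P$ for all these $P'$. By Proposition~\ref{prop:confounder-id} (equivalently Theorem~\ref{thm:psi}(i)), the hypothesis gives
\[
\E_{P}\!\left[\int \Delta'(w,X)\,\dd \nu_P(w\mid X)\right]=\E_{P}\!\left[\int \Delta'(w,X)\,\dd F_{W\mid X}(w\mid X)\right]
\]
for every contrast $\Delta'$ that arises from some admissible outcome regression $Q'$.

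The next step is to choose the outcome regressions so that $\Delta'$ ranges over a rich class. I would take $Q'(1,w,x)=f(w,x)$ and $Q'(0,w,x)=0$, realized for instance by $Y=A f(W,X)+\varepsilon$. Here $f$ runs over bounded measurable functions, or over products $\one_B(w)\one_C(x)$ with $B,C$ Borel. This gives $\int_C\{\nu_P(B\mid x)-F_{W\mid X}(B\mid x)\}\dd P_X(x)=0$ for all $B$ and $C$. Hence $\nu_P(B\mid x)=F_{W\mid X}(B\mid x)$ for $P_X$-a.e.\ $x$, for each fixed $B$. Taking $B$ in a countable $\pi$-system generating the Borel sets of $\mathcal W$ (rational half-lines if $W$ is real, or a countable generating algebra if $\mathcal W$ is standard Borel) lets us intersect countably many null sets. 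A $\pi$-$\lambda$ argument then yields equality of the two conditional measures for $P_X$-a.e.\ $x$. The consequence $\Psi_\nu(P)=\psi(P)$ follows immediately by substituting $F_{W\mid X}$ into the definition of $\Psi_\nu$, and uniqueness is just a restatement.

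The main obstacle is a support issue, and I expect this step to be the hardest. The contrast $\Delta'(w,x)$ is only determined for $(w,x)$ in the support of $P_{XW}$, while $\nu_P(\cdot\mid x)$ could in principle charge sets of $w$ that are $F_{W\mid X}$-null. On such sets $\Delta$ is undefined, or defined arbitrarily. I would handle this in one of two ways. First, I would interpret $\Psi_\nu$ as requiring $\nu_P(\cdot\mid x)\ll F_{W\mid X}(\cdot\mid x)$, which is implicit for $\Psi_\nu$ to be a well-defined functional of $P$. Alternatively, I would note that if $\nu_P$ put mass $m>0$ off the support, then $\Psi_\nu$ would depend on an arbitrary extension of $Q'$ and could not equal $\theta_{\mathrm{ATE}}$ for all versions. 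Under absolute continuity, the test-function argument above applies only to sets $B$ within the support, and it gives equality there. Since both measures are probabilities, that equality forces $\nu_P$ to carry no extra mass. A secondary check is that bounded $f$ keeps every $Q'$ integrable, so all expectations are finite.
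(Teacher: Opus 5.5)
Your proposal is correct and follows essentially the same route as the paper. Both build auxiliary confounder-model laws with the same $(X,A,W)$-margin and outcome $Y=A f(W,X)$ (the paper omits your noise term). Both use outcome-freeness to freeze $\nu_P$, test against $\one\{w\in B\}\one\{x\in C\}$, and finish via uniqueness of regular conditional distributions; your countable $\pi$-system argument simply spells out that last step. Your support worry is settled by the same test functions: the paper implicitly takes $\Delta=h$ as the version of the contrast, and with that version, applying the identity to an $F_{W\mid X}$-null set $B$ already forces $\nu_P(B\mid X)=0$ almost surely, so the extra absolute-continuity reading is not needed.
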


Combined with Theorem~\ref{thm:psi}, Theorem~\ref{thm:psi-unique} shows that $\psi(P)$ is the unique outcome-free stochastic direct effect that equals the average treatment effect under $\mathcal{C}$ and an interventional direct effect under $\mathcal{M}$.

\begin{remark}	
This outcome-free restriction is the natural analogue of how stochastic or interventional mediator distributions are usually specified: the intervention law describes how $W$ is to be drawn given baseline information and is determined from the law of $(X,A,W)$ or by an externally imposed design, not from the conditional law of $Y$ given $(A,W,X)$ \citep{hubbard2008population,vansteelandt2017interventional,diaz2020causal}. Allowing $\nu_P$ to depend on the outcome regression would blur the distinction between defining the estimand and estimating it, because the intervention rule could then be tuned using knowledge of $\Delta(w,x)$ itself. In that enlarged class, uniqueness would be vacuous: one could manufacture many law-specific choices of $\nu_P$ that reproduce $\theta_{\mathrm{ATE}}$ by construction. Restricting attention to outcome-free rules therefore isolates stochastic direct effects whose intervention component is specified independently of the outcome model, so Theorem~\ref{thm:psi-unique} says that among substantively interpretable candidates, $\psi(P)$ is the only one that agrees with the confounder model direct effect for every observed-data law.
\end{remark}

\subsection{Efficient Influence Function and One-Step Estimation}
\label{sec:IF}

Recall the definition $Q(a,w,x):=\E_P(Y\mid A=a,W=w, X=x)$, and define
\[
g(w,x):=\Prob(A=1\mid W=w, X=x),
\]
for all $a,w,x$. Note that
\[
\psi(P)=\E_P\!\left\{Q(1,W,X)-Q(0,W,X)\right\}.
\]

\begin{theorem}[Efficient influence function]\label{thm:eif}
Under Assumption~\ref{ass:positivity} and the nonparametric model for $P$, an efficient influence function for $\psi(P)$ is
\begin{align}
\varphi_{\psi}(O)
=
&\frac{\one\{A=1\}}{g(W,X)}\left\{Y-Q(1,W,X)\right\}
-\frac{\one\{A=0\}}{1-g(W,X)}\left\{Y-Q(0,W,X)\right\}
\notag\\[3pt]
&\qquad
+\Delta(W,X)-\psi(P).
\label{eq:eif}
\end{align}
\end{theorem}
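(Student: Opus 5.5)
The plan is to compute the pathwise derivative of $\psi(P)=\E_P\{Q(1,W,X)-Q(0,W,X)\}$ along a regular parametric submodel $\{P_t\}$ through $P$ at $t=0$, with score $S(O)$, and to exhibit $\varphi_\psi$ as its Riesz representer. Since the model is nonparametric, the tangent space is all of $L_2^0(P)$. Any mean-zero, square-integrable gradient is therefore the unique (hence efficient) influence function.

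First, factor the density as $p(x,w)\,p(a\mid w,x)\,p(y\mid a,w,x)$. The score then decomposes as $S=S_{XW}+S_{A}+S_{Y}$, where each component has conditional mean zero given its parents. Differentiating
\[
\psi(P_t)=\int\!\!\int \sum_{a}(2a-1)\,Q_t(a,w,x)\,p_t(x,w)\,\dd w\,\dd x
\]
at $t=0$ produces two terms. The first comes from perturbing $p(x,w)$ and equals $\E[\{\Delta(W,X)-\psi(P)\}S_{XW}]$. The second comes from perturbing $Q$ and equals
\[
\E\Bigl[\sum_a(2a-1)\,\E\{(Y-Q(a,W,X))S_Y\mid A=a,W,X\}\Bigr].
\]
The perturbation of $p(a\mid w,x)$ contributes nothing, because $\psi$ does not depend on the propensity.

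Next, rewrite the $Q$-term as an expectation over the observed data by inverse weighting. For $a=1$, use $\E[f(W,X)\mid A=1,W,X]=\E[\one\{A=1\}f/g\mid W,X]$, which is justified by Assumption~\ref{ass:positivity}; the case $a=0$ is analogous with $1-g$. This turns the $Q$-term into $\E[\{\one\{A=1\}(Y-Q(1,W,X))/g-\one\{A=0\}(Y-Q(0,W,X))/(1-g)\}S_Y]$.

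It remains to check that $\E[\varphi_\psi S]$ reproduces the derivative. I would use orthogonality of the score components:
\begin{itemize}
\item The residual-weighted terms have mean zero given $(A,W,X)$, so they are orthogonal to $S_{XW}$ and $S_A$.
\item The term $\Delta(W,X)-\psi(P)$ is a function of $(W,X)$ only, so it is orthogonal to $S_A$ and $S_Y$.
\end{itemize}
Hence $\E[\varphi_\psi S]$ equals the sum of the two derivative terms, so $\varphi_\psi$ is a gradient. It has mean zero and, under positivity with bounded weights and finite outcome variance, finite variance. The main obstacle is rigor in the positivity step: finite variance of $\varphi_\psi$ requires $1/g$ to be square-integrable against the conditional variance. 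I would state this as a mild strengthening of Assumption~\ref{ass:positivity}, or bound it via strong positivity.
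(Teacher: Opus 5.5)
Your proposal is correct and follows the paper's proof essentially step for step. It uses the same factorization $p(x,w)\,p(a\mid w,x)\,p(y\mid a,w,x)$ of the score, the same two-term pathwise derivative with no propensity contribution, the same inverse-weighting of $\E\{\dot Q_a(W,X)\}$, and the same appeal to the nonparametric tangent space being all of $L_2^0(P)$. Your explicit cross-orthogonality checks and your remark that $\varphi_\psi\in L_2(P)$ needs strong positivity (or an integrability condition on $\Var(Y\mid A,W,X)/g$) tidy up points the paper leaves implicit; note only that the inverse-weighting identity must be applied to $f(O)=\{Y-Q(1,W,X)\}S_Y(O)$, not to a function of $(W,X)$ alone.
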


The regression $Q(a,w,x)$ is the conditional mean of $Y$ given $(A,W,X)$ in both causal models. The function
\(
g(w,x)=\Prob(A=1\mid W=w,X=x)
\)
is also an observed-data object. Under the confounder model it is a propensity score conditional on confounders. Under the mediator model it is not a treatment-assignment mechanism in a causal sense, because $W$ is post-exposure; nevertheless, it remains a valid nuisance function for the \emph{statistical} parameter $\psi(P)$. This is another sense in which the estimation problem is model-robust.

We estimate $\psi(P)$ with a $K$-fold cross-fitted one-step estimator \citep{chernozhukov18}. Let $\mathcal{I}_1,\ldots,\mathcal{I}_K$ be a random partition of $\{1,\ldots,n\}$ into $K$ approximately equal folds, with $K$ fixed, and let $\mathcal{I}_{-k}:=\{1,\ldots,n\}\setminus \mathcal{I}_k$. For each fold $k$, use only the training sample $\{O_j:j\in\mathcal{I}_{-k}\}$ to construct nuisance estimators $\widehat Q^{(-k)}$ and $\widehat g^{(-k)}$ of $Q$ and $g$. Then, for each observation $i\in\mathcal{I}_k$, compute the held-out score contribution
\begin{align*}
\widehat\phi^{\mathrm{cf}}_i
=
&\frac{\one\{A_i=1\}}{\widehat g^{(-k)}(W_i,X_i)}
\left\{Y_i-\widehat Q^{(-k)}(1,W_i,X_i)\right\}
-
\frac{\one\{A_i=0\}}{1-\widehat g^{(-k)}(W_i,X_i)}
\left\{Y_i-\widehat Q^{(-k)}(0,W_i,X_i)\right\}
\\[3pt]
&\qquad\qquad
+\widehat Q^{(-k)}(1,W_i,X_i)-\widehat Q^{(-k)}(0,W_i,X_i).
\end{align*}
The cross-fitted one-step estimator is
\begin{equation}
\widehat\psi_{\mathrm{cf}}
=
\frac{1}{n}\sum_{k=1}^K \sum_{i\in\mathcal{I}_k}\widehat\phi^{\mathrm{cf}}_i.
\label{eq:psi-est}
\end{equation}
Cross-fitting means that each observation is evaluated using nuisance estimators obtained from a subsample that does not include that observation. This out-of-fold construction reduces overfitting bias and allows flexible machine learning methods to be used for nuisance estimation under weaker empirical-process conditions than full-sample plug-in fitting \citep{chernozhukov18}.

\begin{theorem}[Double robustness and asymptotic normality]
\label{thm:dr}
Fix $K\ge 2$. Let $\widehat\psi_{\mathrm{cf}}$ be the $K$-fold cross-fitted one-step estimator defined in \eqref{eq:psi-est}, and let
\(
g_0(w,x):=\Prob(A=1\mid W=w,X=x).
\)
Assume $Y\in L_2(P)$, $Q(a,\cdot)\in L_2(P)$ for $a=0,1$, and that there exists $\varepsilon_0>0$ such that
\(
\varepsilon_0 \le g_0(W,X)\le 1-\varepsilon_0
\)
almost surely. Assume also that there exists $\varepsilon>0$ such that, with probability tending to $1$,
\(
\varepsilon \le \widehat g^{(-k)}(w,x)\le 1-\varepsilon
\)
for all $(w,x)$ and all $k=1,\ldots,K$. Then:
\begin{enumerate}[label=(\roman*)]
\item If either
\[
\max_{1\le k\le K}\sum_{a=0}^1
\left\|\widehat Q^{(-k)}(a,\cdot)-Q(a,\cdot)\right\|_{L_2(P)}
=
o_P(1),
\]
or
\[
\max_{1\le k\le K}
\left\|\widehat g^{(-k)}-g_0\right\|_{L_2(P)}
=
o_P(1)
\]
together with
\[
\max_{1\le k\le K}\sum_{a=0}^1
\left\|\widehat Q^{(-k)}(a,\cdot)\right\|_{L_2(P)}
=
O_P(1),
\]
then
\[
\widehat\psi_{\mathrm{cf}}\overset{P}{\longrightarrow}\psi(P).
\]

\item If
\[
\max_{1\le k\le K}\sum_{a=0}^1
\left\|\widehat Q^{(-k)}(a,\cdot)-Q(a,\cdot)\right\|_{L_2(P)}
=
o_P(1),
\qquad
\max_{1\le k\le K}
\left\|\widehat g^{(-k)}-g_0\right\|_{L_2(P)}
=
o_P(1),
\]
and
\[
\max_{1\le k\le K}
\left\|\widehat g^{(-k)}-g_0\right\|_{L_2(P)}
\,
\max_{1\le k\le K}\sum_{a=0}^1
\left\|\widehat Q^{(-k)}(a,\cdot)-Q(a,\cdot)\right\|_{L_2(P)}
=
o_P(n^{-1/2}),
\]
then
\[
\sqrt{n}\left\{\widehat\psi_{\mathrm{cf}}-\psi(P)\right\}
\overset{d}{\longrightarrow}
N\!\left(0,\Var_P\{\varphi_{\psi}(O)\}\right).
\]
\end{enumerate}

Define the centered fold-specific estimated influence values by
\(
\widehat\varphi^{\mathrm{cf}}_i
:=
\widehat\phi^{\mathrm{cf}}_i-\widehat\psi_{\mathrm{cf}},
\)
for $i=1,\ldots,n$, where $\widehat\phi_i^{\mathrm{cf}}$ is the held-out score contribution defined above. An asymptotically valid standard error is
\[
\widehat{\mathrm{SE}}(\widehat\psi_{\mathrm{cf}})
=
\sqrt{
\frac{1}{n(n-1)}
\sum_{i=1}^n
\left(\widehat\varphi^{\mathrm{cf}}_i\right)^2
}.
\]
\end{theorem}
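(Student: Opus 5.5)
The plan is the standard von Mises (one-step) decomposition, carried out fold by fold and exploiting the independence that cross-fitting provides. For a fixed fold $k$, write $\phi(O;Q,g)$ for the uncentered score, so that $\widehat\phi^{\mathrm{cf}}_i=\phi(O_i;\widehat Q^{(-k)},\widehat g^{(-k)})$ and $\varphi_\psi=\phi(O;Q,g_0)-\psi(P)$. Let $P_{n,k}$ be the empirical measure on $\mathcal I_k$. Then
\[
\widehat\psi_{\mathrm{cf}}-\psi(P)=\sum_k \tfrac{|\mathcal I_k|}{n}\Bigl[(P_{n,k}-P)\varphi_\psi+(P_{n,k}-P)\{\phi(\cdot;\widehat Q,\widehat g)-\phi(\cdot;Q,g_0)\}+R_k\Bigr],
\]
where $R_k:=P\phi(\cdot;\widehat Q,\widehat g)-\psi(P)$ is the remainder, with the integral taken conditionally on the training sample.

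The first step is to compute $R_k$ exactly. I condition on $(A,W,X)$ inside the residual terms, so that $\E\{Y-\widehat Q(a,W,X)\mid A=a,W,X\}=Q(a,W,X)-\widehat Q(a,W,X)$. Integrating $\one\{A=1\}$ against $g_0$ and combining with the plug-in part gives
\[
R_k=\E_P\Bigl[\Bigl(\tfrac{g_0}{\widehat g}-1\Bigr)(Q-\widehat Q)(1,W,X)\Bigr]-\E_P\Bigl[\Bigl(\tfrac{1-g_0}{1-\widehat g}-1\Bigr)(Q-\widehat Q)(0,W,X)\Bigr].
\]
The factors in parentheses equal $(g_0-\widehat g)/\widehat g$ and $(\widehat g-g_0)/(1-\widehat g)$. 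Since $\widehat g$ lies in $[\varepsilon,1-\varepsilon]$ with probability tending to one, Cauchy--Schwarz gives
\[
|R_k|\le \varepsilon^{-1}\|\widehat g-g_0\|_{L_2}\sum_a\|\widehat Q(a,\cdot)-Q(a,\cdot)\|_{L_2}.
\]
This product structure is the source of double robustness.

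For part (i), the remainder must be shown to vanish in either case.
\begin{itemize}
\item If $\widehat Q$ is consistent, then $\|\widehat g-g_0\|_{L_2}\le 1$ because both functions take values in $[0,1]$, so $R_k\to0$.
\item If instead $\widehat g$ is consistent and $\widehat Q$ is $O_P(1)$ in $L_2$, then $\|\widehat Q-Q\|\le \|\widehat Q\|+\|Q\|=O_P(1)$, so the product is $o_P(1)$.
\end{itemize}
The first term $(P_{n,k}-P)\varphi_\psi$ is $O_P(n^{-1/2})$ by Chebyshev, since $\varphi_\psi\in L_2$ (this uses $Y,Q\in L_2$ and $g_0$ bounded away from $0$ and $1$).

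The empirical-process term is handled conditionally on the training sample, where $\widehat Q$ and $\widehat g$ are fixed. Its conditional mean is zero, and its conditional variance is at most $|\mathcal I_k|^{-1}\|\phi(\cdot;\widehat Q,\widehat g)-\phi(\cdot;Q,g_0)\|_{L_2}^2$. This is where I expect the main care to be needed. For (ii), I must show this $L_2$ norm is $o_P(1)$. I would bound the difference term by term:
\[
\tfrac{\one\{A=1\}}{\widehat g}(Q-\widehat Q)(1,\cdot),\qquad \one\{A=1\}(Y-Q)\Bigl(\tfrac1{\widehat g}-\tfrac1{g_0}\Bigr),
\]
together with the analogous $A=0$ terms and the plug-in difference $\widehat\Delta-\Delta$. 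The terms involving $\widehat Q-Q$ are controlled by $\varepsilon^{-1}\|\widehat Q-Q\|$. The middle term is the delicate one, because $Y-Q$ is only in $L_2$ and is not bounded. Under a bounded conditional variance one gets $\|\widehat g-g_0\|/(\varepsilon\varepsilon_0)$ directly. Otherwise I would use dominated convergence: the integrand is dominated by $(Y-Q)^2(\varepsilon^{-1}+\varepsilon_0^{-1})^2$ and tends to zero in probability along subsequences, which gives $o_P(1)$. A conditional Chebyshev argument then gives $o_P(n^{-1/2})$ for the empirical-process term. For (i), the same bound with $O_P(1)$ norms gives $O_P(n^{-1/2})=o_P(1)$.

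Finally, the rate condition makes $R_k=o_P(n^{-1/2})$, so $\sqrt n(\widehat\psi_{\mathrm{cf}}-\psi)=n^{-1/2}\sum_i\varphi_\psi(O_i)+o_P(1)$, and the CLT gives the stated normal limit. For the standard error, I would show $n^{-1}\sum_i(\widehat\varphi_i^{\mathrm{cf}})^2\to\Var_P\varphi_\psi$ in probability. This follows by expanding around $\varphi_\psi(O_i)$ and using the same fold-wise $L_2$ bound to show the mean-square difference is $o_P(1)$, together with the consistency of $\widehat\psi_{\mathrm{cf}}$ from (i).
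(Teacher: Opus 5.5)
Your proposal is correct and follows essentially the same route as the paper: a fold-wise decomposition controlled by conditional-on-training-sample Chebyshev bounds, the exact second-order remainder (the paper's Lemma~\ref{lem:remainder}) with the Cauchy--Schwarz bound $\varepsilon^{-1}\|\widehat g-g_0\|\sum_a\|\widehat Q_a-Q_a\|$, the same two cases for part (i), and the CLT together with a plug-in variance argument. The only notable difference is that you explicitly justify why the term $\one\{A=1\}(Y-Q)(1/\widehat g-1/g_0)$ is negligible in $L_2(P)$, using a dominated-convergence/subsequence argument, whereas the paper simply asserts this as continuity of $\eta\mapsto\varphi_\eta$ in $L_2(P)$.
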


The estimator has the familiar \emph{estimation} doubly robust form: it combines an outcome regression with a weighting correction, and cross-fitting evaluates each score contribution out of sample. If either nuisance component is consistently estimated while the other remains well behaved, the second-order bias vanishes. If both nuisance components are estimated well enough, the estimator is asymptotically linear with influence function \eqref{eq:eif}, which yields Wald-type confidence intervals and tests. Thus the proposed procedure is robust at two levels: the estimand is model-robust across the two causal scenarios, and the estimator is doubly robust with respect to nuisance estimation.

\subsection{Testing the Null of No Direct Effect Under Structural Ambiguity}
\label{sec:testing}

If one insists on a model-specific definition of the direct effect---namely, $\theta_{\mathrm{ATE}}$ under $\mathcal{C}$ and $\theta_{\mathrm{NDE}}$ under $\mathcal{M}$---then the null of no direct effect under ambiguity about the structural role of $W$ becomes the composite union null
\[
H_0^{\cup}:
\quad
(\mathcal{C}\ \text{and}\ \theta_{\mathrm{ATE}}=0)
\ \ \cup\ \ 
(\mathcal{M}\ \text{and}\ \theta_{\mathrm{NDE}}=0).
\]
Compared with an ordinary single-parameter test, this null depends on which structural model generated the data.

Suppose $p_C$ is a valid $p$-value for testing $\theta_{\mathrm{ATE}}=0$ under $\mathcal{C}$ and $p_M$ is a valid $p$-value for testing $\theta_{\mathrm{NDE}}=0$ under $\mathcal{M}$; for example, these may be obtained from standard semiparametric estimators for the average treatment effect \citep{Robins01091994} and the natural direct effect \citep{tchetgen2012semiparametric}. Define
\[
p_{\max}:=\max\{p_C,p_M\}.
\]

\begin{proposition}
\label{prop:union-test}
The test that rejects $H_0^{\cup}$ whenever $p_{\max}<\alpha$ has size at most $\alpha$. Equivalently, one rejects only when both model-specific tests reject at level $\alpha$.
\end{proposition}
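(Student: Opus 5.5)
The plan is the standard intersection–union argument. Fix a data-generating process that lies in the null $H_0^{\cup}$. By definition of the union null, either (a) the process follows $\mathcal{C}$ with $\theta_{\mathrm{ATE}}=0$, or (b) it follows $\mathcal{M}$ with $\theta_{\mathrm{NDE}}=0$. Whichever case holds, one of the two component $p$-values is valid for the true law. The key point is that no knowledge of which case holds is needed, because the rejection rule uses both $p$-values symmetrically.

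In case (a), validity of $p_C$ under $\mathcal{C}$ with $\theta_{\mathrm{ATE}}=0$ gives $\Prob(p_C<\alpha)\le\alpha$. Since $\{p_{\max}<\alpha\}=\{p_C<\alpha\}\cap\{p_M<\alpha\}\subseteq\{p_C<\alpha\}$, we get
\[
\Prob(p_{\max}<\alpha)\le \Prob(p_C<\alpha)\le\alpha .
\]
This holds regardless of how $p_M$ behaves under $\mathcal{C}$; there it need not be valid and may not even target a causal quantity. Case (b) is symmetric, using $\{p_{\max}<\alpha\}\subseteq\{p_M<\alpha\}$ and the validity of $p_M$ under $\mathcal{M}$ with $\theta_{\mathrm{NDE}}=0$.

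Taking the supremum over all null processes gives size at most $\alpha$. The equivalence statement is immediate, since $\max\{p_C,p_M\}<\alpha$ if and only if $p_C<\alpha$ and $p_M<\alpha$.

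The only delicate point is the meaning of ``valid.'' I would make explicit that $p_C$ is valid under every $\mathcal{C}$-law with $\theta_{\mathrm{ATE}}=0$, and $p_M$ under every $\mathcal{M}$-law with $\theta_{\mathrm{NDE}}=0$, so that $\Prob(p<\alpha)\le\alpha$ holds for each such law. If the $p$-values are only asymptotically valid, as with the semiparametric estimators cited, the same argument gives $\limsup_n \Prob(p_{\max}<\alpha)\le\alpha$ pointwise over the null. I would note this in a remark rather than treat it as an obstacle.
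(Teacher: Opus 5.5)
Your proposal is correct and matches the paper's proof: both split on which component of $H_0^{\cup}$ holds, then bound $\Prob(p_{\max}<\alpha)$ by $\Prob(p_C<\alpha)\le\alpha$ or $\Prob(p_M<\alpha)\le\alpha$ via the inclusion $\{p_{\max}<\alpha\}=\{p_C<\alpha\}\cap\{p_M<\alpha\}$. Your remarks on the equivalence and on asymptotically valid $p$-values are sound additions that the paper does not spell out.
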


The approach above is an intersection--union argument. If the data were generated under the confounder model, then only the validity of $p_C$ matters; if they were generated under the mediator model, then only the validity of $p_M$ matters. Requiring both tests to reject is therefore conservative under the union null. By contrast, if one adopts the model-robust parameter $\psi(P)$, then there is a single null
\[
H_0^{\psi}:\quad \psi(P)=0,
\]
with a single Wald test based on Theorem~\ref{thm:dr}. This is another practical advantage of the proposed target: ambiguity about the structural role of $W$ does not change the formal testing problem.

\section{Simulation Study}\label{sec:sim}

The simulation study had three aims: to assess the finite-sample performance of the proposed estimator $\widehat\psi_{\mathrm{cf}}$, to evaluate confidence interval coverage, and to illustrate the estimand mismatch that arises when a mediation formula estimator is used when the structural role of $W$ is ambiguous.

We considered three data-generating cases, each with $X\sim N(0,1)$ and independent $N(0,1)$ outcome noise.

\begin{enumerate}[label=\textbf{Case \arabic*.}, leftmargin=2.35cm]
\item \textbf{Confounder with interaction.} The variable $W$ was generated before $A$ and affected both $A$ and $Y$:
\begin{align*}
&\Prob(W=1\mid X)=\operatorname{expit}(-0.2+0.6X),\\
&\Prob(A=1\mid W,X)=\operatorname{expit}(-0.1+0.5X+1.3W),\\
&Y=0.5+0.8A+0.9W+1.0\,AW+0.4X+\varepsilon.
\end{align*}

\item \textbf{Mediator without interaction.} The variable $W$ was generated after $A$, and there was no exposure--$W$ interaction in the outcome:
\begin{align*}
&\Prob(A=1\mid X)=\operatorname{expit}(-0.1+0.5X),\\
&\Prob(W=1\mid A,X)=\operatorname{expit}(-0.4+1.4A+0.6X),\\
&Y=0.5+0.8A+0.9W+0.4X+\varepsilon.
\end{align*}

\item \textbf{Mediator with interaction.} The data-generating mechanism was the same as in Case~2 except that the outcome included an interaction term:
\[
Y=0.5+0.8A+0.9W+1.0\,AW+0.4X+\varepsilon.
\]
\end{enumerate}

For each case and each sample size $n\in\{500,2000\}$, we generated 500 Monte Carlo samples. The proposed estimator was the 5-fold cross-fitted one-step estimator $\widehat\psi_{\mathrm{cf}}$. Within each training fold, the nuisance functions $Q(a,w,x)=\E(Y\mid A=a,W=w,X=x)$ and $g(w,x)=\Prob(A=1\mid W=w,X=x)$ were estimated with parametric working models. Specifically, the outcome regression was fit by linear regression with terms $(1,A,W,AW,X)$, and $g$ was fit by logistic regression of $A$ on $(1,W,X)$. The outcome model is correctly specified in all three cases. The held-out score contributions were then combined according to \eqref{eq:psi-est}, and Wald intervals were constructed using the cross-fitted influence-function variance estimator from Theorem~\ref{thm:dr}.

As a contrast, we also computed a parametric mediation-formula plug-in comparator targeting
\[
\lambda(P):=\E_P\!\left[\int \Delta(w,X)\,\dd F_{W\mid A=0,X}(w\mid 0,X)\right].
\]
In Cases~2 and~3, the working models used for this comparator are correctly specified, so under the mediator model it targets $\theta_{\mathrm{NDE}}$. In Case~1, it is included as a working-model benchmark and does not in general exactly recover $\lambda(P)$. In Case~1, however, $\lambda(P)$ is simply the observed-data functional targeted by the comparator and does not have a clear causal interpretation, because $W$ is generated as a pre-exposure confounder. Bias and root mean squared error for the comparator were evaluated relative to $\psi(P)$, because the point of the comparison is to show when the mediation-formula target diverges from the model-robust direct effect.

The true values were
\(
(\psi,\lambda)=(1.254,1.076)
\)
in Case~1,
\(
(\psi,\lambda)=(0.800,0.800)
\)
in Case~2, and
\(
(\psi,\lambda)=(1.353,1.209)
\)
in Case~3. Thus Cases~1 and 3 were chosen so that the proposed parameter and the comparator target differ for substantive reasons, whereas Case~2 satisfies the no-interaction condition in Theorem~\ref{thm:no-interaction}.

\subsection{Results}

Table~\ref{tab:sim} shows that $\widehat\psi_{\mathrm{cf}}$ was essentially unbiased in all three cases and at both sample sizes. Its absolute bias never exceeded 0.003, its empirical root mean squared error decreased substantially as $n$ increased, and the empirical coverage of nominal 95\% Wald intervals ranged from 0.946 to 0.962.

\begin{table}[t!]
\centering
\caption{Finite-sample performance over 500 Monte Carlo samples. The table reports the Monte Carlo mean, bias, RMSE, and empirical 95\% confidence interval coverage for $\widehat\psi_{\mathrm{cf}}$, together with the Monte Carlo mean of the mediation formula comparator and its bias relative to $\psi(P)$.}
\label{tab:sim}
\resizebox{\textwidth}{!}{%
\begin{tabular}{lrrrrrrr}
\toprule
Case & $n$ & Mean $\widehat\psi_{\mathrm{cf}}$ & Bias & RMSE & 95\% coverage & Mean comparator & Comparator bias to $\psi(P)$ \\
\midrule
Case 1: Confounder with interaction &  500 & 1.255 & 0.001 & 0.109 & 0.946 & 1.073 & -0.181 \\
Case 1: Confounder with interaction & 2000 & 1.254 & -0.000 & 0.054 & 0.948 & 1.075 & -0.179 \\
Case 2: Mediator without interaction &  500 & 0.803 & 0.003 & 0.096 & 0.962 & 0.803 & 0.003 \\
Case 2: Mediator without interaction & 2000 & 0.801 & 0.001 & 0.049 & 0.960 & 0.802 & 0.002 \\
Case 3: Mediator with interaction &  500 & 1.356 & 0.003 & 0.101 & 0.950 & 1.212 & -0.141 \\
Case 3: Mediator with interaction & 2000 & 1.355 & 0.002 & 0.049 & 0.954 & 1.212 & -0.141 \\
\bottomrule
\end{tabular}}
\end{table}

The mediation formula comparator behaved as the theory predicts: in Case~2, where there was no additive interaction, it agreed with $\psi(P)$ up to Monte Carlo error. In Case~1, where $W$ was actually a confounder, it converged to a value about 0.18 below $\psi(P)$. In Case~3, where $W$ was a mediator but exposure--$W$ interaction was present, it converged to a value about 0.14--0.15 below $\psi(P)$. Importantly, this discrepancy is not caused by numerical instability: the comparator was centered near its own large-sample limit in all three cases. In Cases~2 and 3, that limit equals the natural direct effect. In Case~1, however, the same limit does not have a clear causal interpretation because the data were generated under a confounder structure. The discrepancy in Cases~1 and 3 therefore reflects estimand mismatch rather than estimator failure.

Figure~\ref{fig:sim-box} shows the full Monte Carlo distributions. The solid horizontal line in each panel marks the true value of $\psi(P)$, while the dashed line marks $\lambda(P)$, the large-sample limit of the mediation formula comparator. In Cases~2 and 3, $\lambda(P)$ equals the natural direct effect. In Case~1, it does not have a clear causal interpretation. The proposed estimator tracks the solid line across all cases. The comparator tracks the dashed line and coincides with the solid line only in the no-interaction case. Figure~\ref{fig:sim-summary} summarizes empirical bias relative to $\psi(P)$ and confidence interval coverage for $\widehat\psi_{\mathrm{cf}}$. The main point is not only that $\widehat\psi_{\mathrm{cf}}$ performs well statistically, but also that it continues to target the same scientifically interpretable quantity across the two competing causal stories for $W$.

\begin{figure}[t!]
\centering
\includegraphics[width=\textwidth]{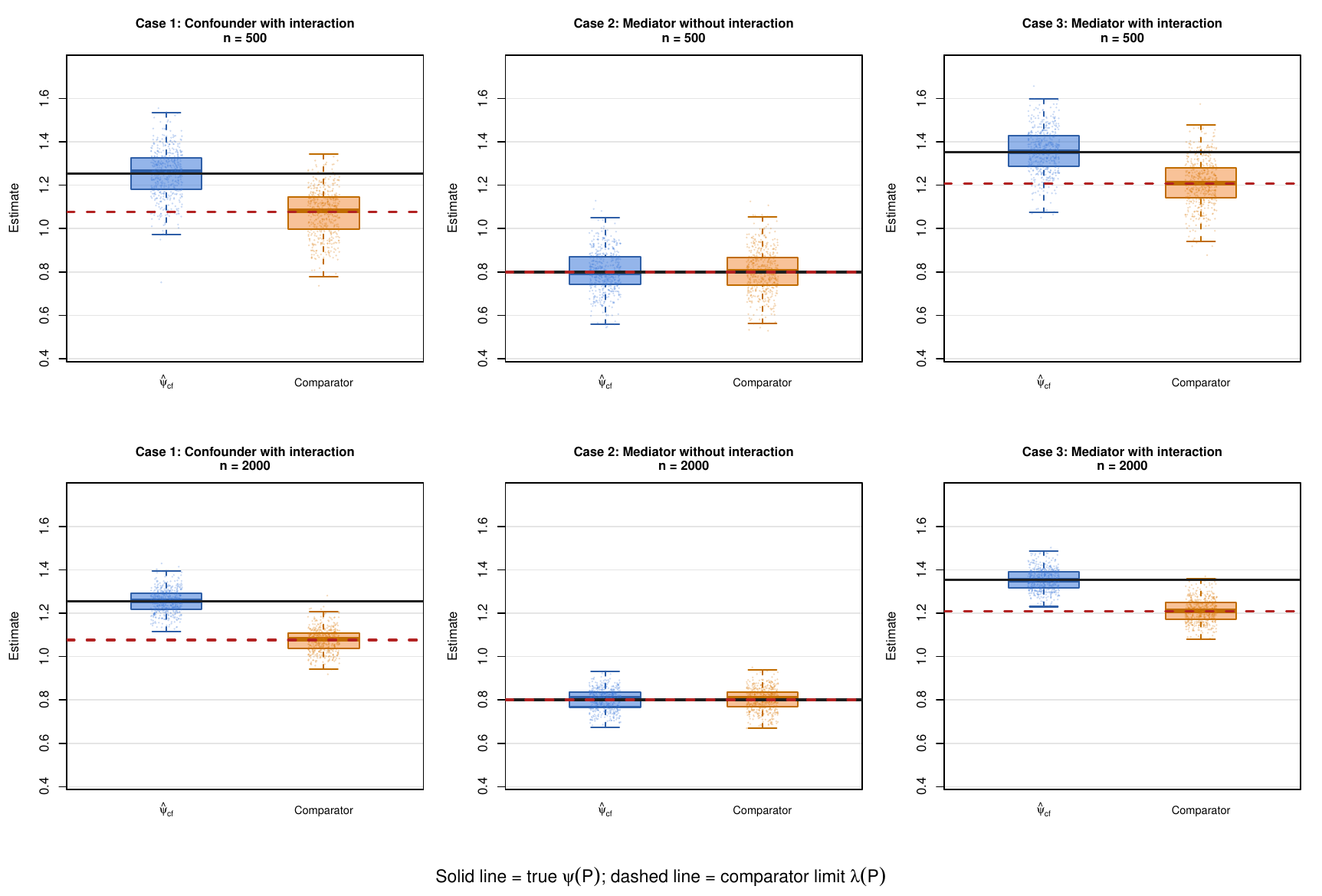}
\caption{Monte Carlo distributions of the cross-fitted estimator $\widehat\psi_{\mathrm{cf}}$ and the mediation formula comparator. The solid line is the true value of $\psi(P)$; the dashed line is $\lambda(P)$, the large-sample limit of the comparator. In Cases~2 and 3, $\lambda(P)=\theta_{\mathrm{NDE}}$.}
\label{fig:sim-box}
\end{figure}

\begin{figure}[t!]
\centering
\includegraphics[width=\textwidth]{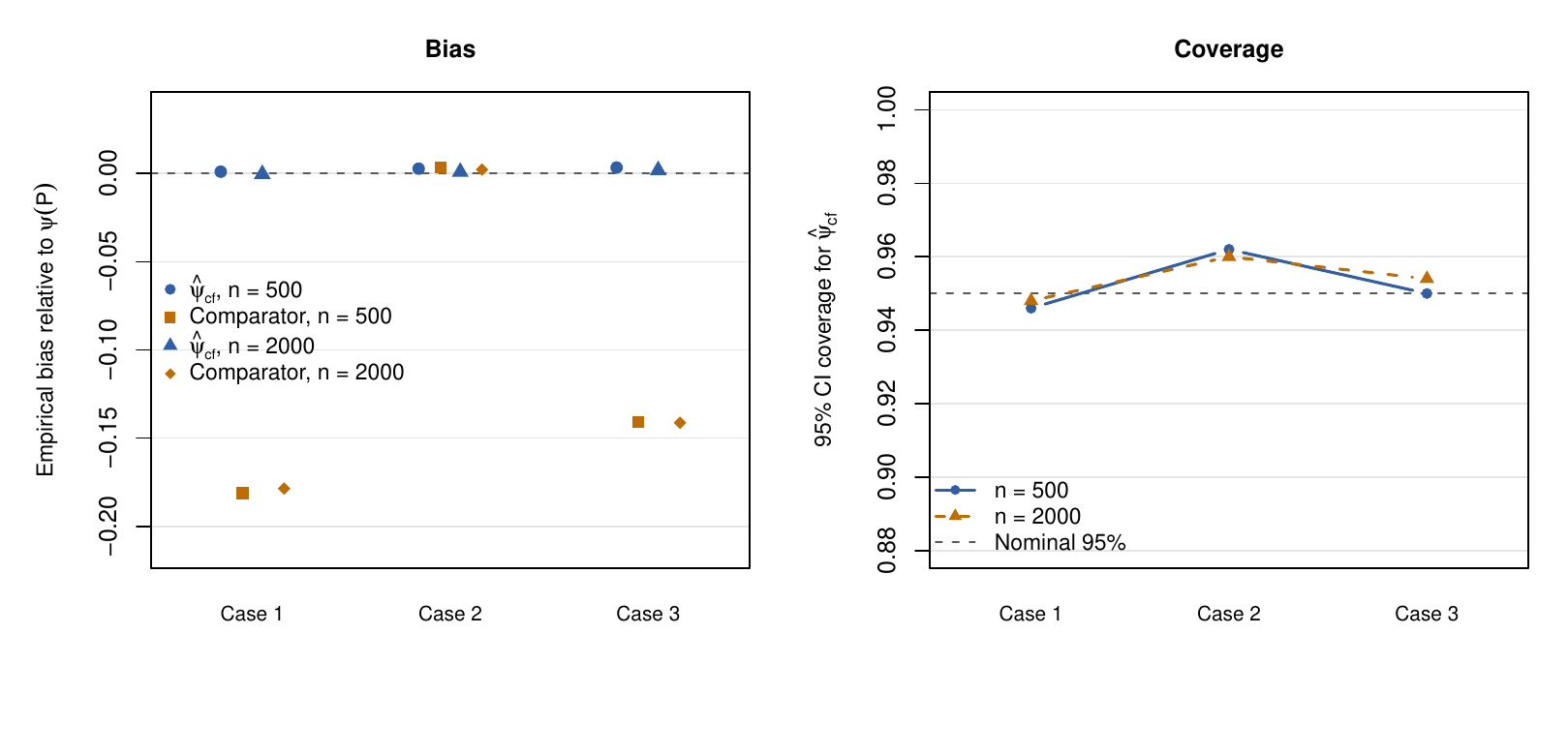}
\vspace{-10mm}
\caption{Empirical bias relative to $\psi(P)$ for the cross-fitted estimator and the mediation formula comparator, together with empirical 95\% confidence interval coverage for $\widehat\psi_{\mathrm{cf}}$.}
\label{fig:sim-summary}
\end{figure}

The simulation study highlights the distinction between numerical performance and target validity. A well-implemented mediation analysis can still answer a different question from the one of interest when ambiguity about the structural role of $W$ has not been resolved. The parameter $\psi(P)$ avoids that problem by keeping the target fixed across the confounder and mediator cases while also supporting standard doubly robust inference through the cross-fitted one-step estimator.

\section{Application: Direct Effect of Elevated PFAS Burden on Uric Acid Relative to Kidney Function}
\label{sec:application}

Recent work has linked per- and polyfluoroalkyl substances (PFAS) to higher uric acid levels and has suggested that kidney function decline may mediate part of this association \citep{niu2024kidney,zeng2019isomers}. This is a natural setting for our proposal because the structural role of kidney function is ambiguous in cross-sectional PFAS studies: on the one hand, impaired kidney function can raise measured serum PFAS concentrations through reduced renal clearance and is itself a strong determinant of serum uric acid, so estimated glomerular filtration rate (eGFR) may be viewed as a pre-exposure characteristic that confounds the PFAS--uric acid association \citep{kshirsagar2022environmental,dhingra2017reverse,moon2021pfas_kidney}. On the other hand, higher PFAS burden has also been associated with lower kidney function, so eGFR may plausibly lie on the pathway from PFAS burden to uric acid \citep{lin2021pfas_kidney,niu2024kidney}. Recent studies have further noted that PFAS--uric acid associations vary across kidney function stages \citep{zeng2019isomers,niu2024kidney}. Thus kidney function is scientifically relevant but structurally ambiguous, aligned with the setting considered in this paper.

We used publicly available NHANES 2013--2018 data from the demographic, laboratory, examination, questionnaire, and environmental-chemicals components \citep{nhaneswebsite}. We restricted attention to adults aged 20 years or older and used a complete-case analysis for the variables described below, leaving an analytic sample of $n=4,287$. Because our method is developed for a binary exposure, we defined
\[
A=\one\{\mathrm{PFHxS}+\mathrm{PFNA}+\mathrm{PFOA}+\mathrm{PFOS}\ge 20\text{ ng/mL}\}.
\]
The 20 ng/mL cutoff is the NASEM/ATSDR increased-risk category. When the available panel includes PFOA, PFOS, PFHxS, and PFNA, NASEM recommends forming the sum from those committee analytes \citep{nasem2022pfas,atsdr2024pfas}. We set $W$ equal to eGFR and $Y$ equal to serum uric acid. eGFR was computed from serum creatinine using the 2021 CKD-EPI creatinine equation \citep{inker2021new}, and the NHANES creatinine assay is standardized to an isotope-dilution mass-spectrometry reference method, which is the calibration assumed by that equation \citep{nhanes_biopro_2015}. The baseline covariates $X$ were age, sex, race/ethnicity, education, poverty-income ratio, body mass index, log serum cotinine, physical activity, and survey cycle. In the weighted analytic sample, 9.69\% met the high-PFAS definition, the mean PFAS sum was 10.90 ng/mL, the mean eGFR was 96.07 mL/min/1.73m$^2$, and the mean serum uric acid was 5.43 mg/dL.

Because serum PFAS were measured in a one-third subsample of NHANES participants aged 12 years and older, and because NHANES uses a multistage stratified design, we used the PFAS subsample weights together with the NHANES strata and primary sampling units (PSUs) \citep{nhanes_pfas_2015,nhanes_sample_design}. Following CDC guidance for combining post-2001 NHANES cycles, we formed a 6-year PFAS weight by dividing the 2-year PFAS subsample weight by 3 \citep{nhanes_weighting_tutorial}; see Appendix~\ref{app:nhanes-weights} for details. We estimated the survey-weighted analogue $\widehat\psi_{\mathrm{cf},w}$ of $\psi(P)$ with the 5-fold cross-fitted one-step estimator of Section~4.2. The outcome regression included $A$, $W$, an $A\times W$ interaction, and all components of $X$. The exposure model included $W$ and all components of $X$, so the nuisance specifications matched the covariate set used in the application. Predicted exposure probabilities were truncated to the interval $[0.01,0.99]$. As a mediation model comparator, we estimated the natural direct effect using weighted linear models for $W$ given $A$ and $X$ and for $Y$ given $A$, $W$, $A\times W$, and $X$. For the model-robust direct effect, we report both a design-based confidence interval based on the survey-weighted mean of the cross-fitted one-step scores and a percentile confidence interval from a PSU bootstrap with $500$ replicates. For both the model-robust estimator and the mediation model comparator, the PSU bootstrap resampled primary sampling units with replacement within strata and refit all nuisance models in every replicate. We do not report a design-based confidence interval for the comparator because a simple survey-linearization of the weighted mean of fitted comparator contrasts treats the nuisance regressions as fixed and therefore fails to propagate nuisance estimation uncertainty; the PSU bootstrap does propagate that uncertainty. On the other hand, for our cross-fitted one-step estimator, nuisance fitting contributes only a second-order remainder.

The estimated model-robust direct effect was
\(
\widehat\psi_{\mathrm{cf},w}=0.25\text{ mg/dL},
\)
with design-based 95\% confidence interval
\(
(0.05,\ 0.45)
\)
and percentile PSU-bootstrap 95\% confidence interval
\(
(0.12,\ 0.45).
\)
The mediation model comparator was
\(
\widehat\theta_{\mathrm{NDE},w}=0.10\text{ mg/dL},
\)
with percentile PSU-bootstrap 95\% confidence interval
\(
(-0.03,\ 0.23).
\)
Thus the model-robust direct effect exceeded the mediation model comparator by 0.15 mg/dL, or about 59.30\% of the estimated model-robust direct effect. 

\begin{table}[t]\small
\centering
\caption{Survey-weighted NHANES PFAS application results. The mediation model comparator corresponds to the natural direct effect under the mediator model.}
\label{tab:nhanes-app}
\begin{tabular}{lccc}
\toprule
Quantity & Estimate (mg/dL) & Design-based 95\% CI & PSU-bootstrap 95\% CI \\
\midrule
Model-robust direct effect $\psi(P)$ & 0.25 & $(0.05,\ 0.45)$ & $(0.12,\ 0.45)$ \\
Mediation model comparator & 0.10 & --- & $(-0.03,\ 0.23)$ \\
Difference $\widehat\psi_{\mathrm{cf},w}-\widehat\theta_{\mathrm{NDE},w}$ & 0.15 & --- & $(0.05,\ 0.30)$ \\
\bottomrule
\end{tabular}
\end{table}

This application illustrates the practical point of our framework. In this PFAS--kidney function--uric acid example, the mediation model comparator yields a materially smaller reported direct effect than the model-robust procedure. If kidney function is genuinely post-exposure, then the comparator has the usual natural direct effect interpretation; if kidney function is instead better viewed as pre-exposure because reduced renal clearance elevates serum PFAS, that interpretation no longer holds, whereas $\psi(P)$ still does. The application shows that when the structural role of kidney function is ambiguous, mediation-style analyses and the model-robust procedure proposed here can lead to materially different reported answers. In that sense, the model-robust direct effect provides a single interpretable summary that remains valid under either causal ordering.

\section{Conclusion}\label{sec:conclusion}

This paper studies a basic but overlooked problem in direct effect analysis: what should be estimated when a focal variable is known to affect the outcome but may be either a pre-exposure confounder or a post-exposure mediator? The main negative result is that one generally cannot use a single observed-data functional to recover both the confounder-model direct effect and the mediator-model natural direct effect. As a consequence, if a practitioner applies natural direct effect methodology when the intermediate variable is actually pre-exposure, the resulting estimate may target a quantity with no clear causal interpretation. The main positive result is that the proposed parameter keeps the statistical target fixed across the two causal scenarios: it equals the average treatment effect when $W$ is pre-exposure and an interventional direct effect when $W$ is post-exposure. A further contribution is a canonicity result: within a natural class of outcome-free stochastic direct effects, $\psi(P)$ is unique. In that sense, $\psi(P)$ is not merely a convenient repair to the estimand mismatch problem; it is the canonical observed-data target in this class.

This contribution is about model robustness, or identification robustness, rather than the more familiar estimator-level robustness for a fixed estimand. The uniqueness result sharpens that contribution by showing that, within a natural class, the proposed target is not just one plausible bridge between the two causal scenarios, but the only one that coincides with the confounder-model direct effect for every law and therefore retains a direct effect interpretation under both models. At the same time, the inferential procedure is robust at two levels. The estimand is model-robust across the two causal scenarios, and the cross-fitted one-step estimator is doubly robust with respect to nuisance estimation. The simulation study shows that the estimator performs well and clarifies that discrepancies between mediation analyses and adjustment analyses can reflect estimand mismatch rather than estimation failure. The survey-weighted NHANES application on elevated PFAS burden, kidney function, and uric acid illustrates this point in practice: a mediation model analysis treating kidney function as post-exposure produced a materially smaller reported direct-effect estimate than the proposed model-robust procedure, even though kidney function may plausibly be either pre-exposure or post-exposure in cross-sectional PFAS data.

Several extensions deserve future work. The present analysis is formulated on the additive mean scale with a single intermediate variable. Longitudinal settings, survival outcomes, and effect measures on nonadditive scales will require additional work. It would also be useful to study how sensitivity analysis for structural ambiguity can be combined with sensitivity analysis for unmeasured confounding and how the present ideas can be developed more fully for complex survey settings.

\bibliographystyle{plainnat}
\bibliography{revised_manuscript}

\newpage

\begin{center}

\textbf{\Large Supplementary Material for\\ ``Model-Robust Direct Effect Under\\ Confounder--Mediator Ambiguity''\vspace{5mm}}

AmirEmad Ghassami

Department of Mathematics and Statistics, Boston University

\vspace{10mm}

\end{center}

\appendix

\section{Proofs of Section \ref{sec:desc}}

\subsection{Proof of Proposition~\ref{prop:confounder-id}}

For each $a\in\{0,1\}$,
\[
\E\!\left\{Y^{(a)}\right\}
=
\E\!\left[\E\!\left\{Y^{(a)}\mid X,W\right\}\right].
\]
By Assumption~\ref{ass:confounder}(b),
\[
\E\!\left\{Y^{(a)}\mid X,W\right\}
=
\E\!\left\{Y^{(a)}\mid A=a,X,W\right\},
\]
and by Assumption~\ref{ass:confounder}(a),
\[
\E\!\left\{Y^{(a)}\mid A=a,X,W\right\}
=
\E\!\left\{Y\mid A=a,X,W\right\}
=
Q(a,W,X).
\]
Therefore
\[
\E\!\left\{Y^{(a)}\right\}=\E_P\!\left\{Q(a,W,X)\right\}.
\]
Subtracting the expression for $a=0$ from the expression for $a=1$ gives
\[
\theta_{\mathrm{ATE}}
=
\E_P\!\left\{Q(1,W,X)-Q(0,W,X)\right\}
=
\E_P\!\left\{\Delta(W,X)\right\}.
\]
The equivalent integral representation follows from iterated expectation:
\[
\E_P\!\left\{\Delta(W,X)\right\}
=
\E_P\!\left[\int \Delta(w,X)\,\dd F_{W\mid X}(w\mid X)\right].
\qedhere
\]

\subsection{Proof of Proposition~\ref{prop:nde-id}}

By iterated expectation,
\[
\theta_{\mathrm{NDE}}
=
\E\!\left[\E\!\left\{Y^{(1,W^{(0)})}-Y^{(0,W^{(0)})}\mid X\right\}\right].
\]
Fix $x$. Under Assumption~\ref{ass:crossworld},
\[
\E\!\left\{Y^{(a,W^{(0)})}\mid X=x\right\}
=
\int \E\!\left\{Y^{(a,w)}\mid X=x\right\}\,\dd F_{W^{(0)}\mid X}(w\mid x).
\]
By Assumption~\ref{ass:mediator}(c),
\[
\E\!\left\{Y^{(a,w)}\mid X\right\}
=
\E\!\left\{Y^{(a,w)}\mid A=a,X\right\}.
\]
Then Assumption~\ref{ass:mediator}(d) implies
\[
\E\!\left\{Y^{(a,w)}\mid A=a,X\right\}
=
\E\!\left\{Y^{(a,w)}\mid A=a,W=w,X\right\},
\]
and Assumption~\ref{ass:mediator}(b) yields
\[
\E\!\left\{Y^{(a,w)}\mid A=a,W=w,X\right\}
=
\E\!\left\{Y\mid A=a,W=w,X\right\}
=
Q(a,w,X).
\]
Finally, Assumption~\ref{ass:mediator}(c) together with Assumption~\ref{ass:mediator}(a) gives
\[
F_{W^{(0)}\mid X}(\cdot\mid X)=F_{W\mid A=0,X}(\cdot\mid 0,X).
\]
Combining the pieces,
\[
\theta_{\mathrm{NDE}}
=
\E_P\!\left[\int \{Q(1,w,X)-Q(0,w,X)\}\,\dd F_{W\mid A=0,X}(w\mid 0,X)\right],
\]
which is the stated formula.

\subsection{Proof of Theorem~\ref{thm:no-common}}

We construct 2 causal data-generating processes that induce the same observed-data law but different direct-effect parameters.

\paragraph{Confounder model.}
Let $U_W,U_A\stackrel{\mathrm{iid}}{\sim}\mathrm{Unif}(0,1)$. Define
\[
W=\one\{U_W\le 1/2\},
\qquad
A=\one\{U_A\le 1/4 + W/2\},
\qquad
Y=AW.
\]
Then $W$ is pre-exposure. The potential outcomes are
\[
Y^{(0)}=0,\qquad Y^{(1)}=W,
\]
so
\[
\theta_{\mathrm{ATE}}=\E(W)=1/2.
\]
Because $Y^{(a)}$ is a measurable function of $W$ alone, $Y^{(a)}\indep A\mid W$ holds. Thus Assumption~\ref{ass:confounder} is satisfied (with no baseline covariates $X$).

\paragraph{Mediator model.}
Let $U_A,U_W\stackrel{\mathrm{iid}}{\sim}\mathrm{Unif}(0,1)$. Define
\[
A=\one\{U_A\le 1/2\},
\qquad
W=\one\{U_W\le 1/4 + A/2\},
\qquad
Y=AW.
\]
Then $W$ is post-exposure. The potential mediator values are
\[
W^{(0)}=\one\{U_W\le 1/4\},
\qquad
W^{(1)}=\one\{U_W\le 3/4\},
\]
and the potential outcomes satisfy
\[
Y^{(0,w)}=0,\qquad
Y^{(1,w)}=w.
\]
Hence
\[
Y^{(1,W^{(0)})}=W^{(0)},\qquad
Y^{(0,W^{(0)})}=0,
\]
so
\[
\theta_{\mathrm{NDE}}=\E\!\left\{W^{(0)}\right\}=1/4.
\]
Because the structural equations use independent exogenous variables, Assumptions~\ref{ass:mediator} and \ref{ass:crossworld} hold.

\paragraph{Observed-data law.}
Under the confounder model,
\[
\Prob(W=1)=1/2,\qquad
\Prob(A=1\mid W=0)=1/4,\qquad
\Prob(A=1\mid W=1)=3/4.
\]
Under the mediator model,
\[
\Prob(A=1)=1/2,\qquad
\Prob(W=1\mid A=0)=1/4,\qquad
\Prob(W=1\mid A=1)=3/4.
\]
These imply the same joint distribution of $(A,W)$:
\begin{align*}
\Prob(A=0,W=0)&=3/8,\\
\Prob(A=1,W=0)&=1/8,\\
\Prob(A=0,W=1)&=1/8,\\
\Prob(A=1,W=1)&=3/8.
\end{align*}
Since in both models $Y=AW$ deterministically, the full observed-data law of $(A,W,Y)$ is the same.

Thus a common observed-data functional $T(P)$ would have to equal both $1/2$ and $1/4$ for the same observed-data law $P$, which is impossible.

\section{Proofs of Section \ref{sec:no-interaction}}
\subsection{Proof of Theorem~\ref{thm:no-interaction}}

Under Assumption~\ref{ass:no-interaction},
\[
\Delta(w,x)=\delta(x)
\]
for $P$-almost every $(w,x)$. Therefore
\[
\E_P\!\left\{\Delta(W,X)\right\}
=
\E_P\!\left\{\delta(X)\right\}.
\]
Likewise,
\[
\E_P\!\left[\int \Delta(w,X)\,\dd F_{W\mid X}(w\mid X)\right]
=
\E_P\!\left\{\delta(X)\right\},
\]
and
\[
\E_P\!\left[\int \Delta(w,X)\,\dd F_{W\mid A=0,X}(w\mid 0,X)\right]
=
\E_P\!\left\{\delta(X)\right\}.
\]
Now apply Proposition~\ref{prop:confounder-id} to the confounder model direct effect and Proposition~\ref{prop:nde-id} to the mediator-model natural direct effect.

\subsection{Proof of Proposition~\ref{prop:sensitivity}}

By Proposition~\ref{prop:nde-id} and the definition of $\psi(P)$,
\[
\psi(P)-\theta_{\mathrm{NDE}}
=
\E_P\!\left[
\int \Delta(w,X)\,
\Bigl\{
\dd F_{W\mid X}(w\mid X)-\dd F_{W\mid A=0,X}(w\mid 0,X)
\Bigr\}
\right].
\]

Now let $\pi(X):=\Prob(A=1\mid X)$. By the law of total probability,
\[
F_{W\mid X}(\cdot\mid X)
=
\pi(X)\,F_{W\mid A=1,X}(\cdot\mid 1,X)
+
\{1-\pi(X)\}\,F_{W\mid A=0,X}(\cdot\mid 0,X).
\]
Substituting this identity into the previous display gives
\[
\psi(P)-\theta_{\mathrm{NDE}}
=
\E_P\!\left[
\pi(X)\int \Delta(w,X)\,
\Bigl\{
\dd F_{W\mid A=1,X}(w\mid 1,X)-\dd F_{W\mid A=0,X}(w\mid 0,X)
\Bigr\}
\right].
\]
Recognizing the two conditional integrals as
\[
\E_P\{\Delta(W,X)\mid A=a,X\},
\qquad a\in\{0,1\},
\]
yields \eqref{eq:gap-conditional}.

If $W$ is binary, then
\begin{align*}
&\int \Delta(w,X)\,
\Bigl\{
\dd F_{W\mid X}(w\mid X)-\dd F_{W\mid A=0,X}(w\mid 0,X)
\Bigr\} \\
&\qquad=
\Delta(1,X)\Bigl\{\Prob(W=1\mid X)-\Prob(W=1\mid A=0,X)\Bigr\} \\
&\qquad\quad+
\Delta(0,X)\Bigl\{\Prob(W=0\mid X)-\Prob(W=0\mid A=0,X)\Bigr\} \\
&\qquad=
\bigl\{\Delta(1,X)-\Delta(0,X)\bigr\}
\Bigl\{\Prob(W=1\mid X)-\Prob(W=1\mid A=0,X)\Bigr\},
\end{align*}
which proves the binary simplification.

For the bound, fix $x$ and let $P_x$ and $Q_x$ denote the conditional laws
\[
P_x=F_{W\mid X}(\cdot\mid x),
\qquad
Q_x=F_{W\mid A=0,X}(\cdot\mid 0,x).
\]
For the bounded measurable function $h_x(w)=\Delta(w,x)$,
\[
\int h_x(w)\,\dd P_x(w)-\int h_x(w)\,\dd Q_x(w)
\]
is the conditional discrepancy at $x$. A standard total-variation inequality yields
\[
\left|
\int h_x(w)\,\dd P_x(w)-\int h_x(w)\,\dd Q_x(w)
\right|
\le
\operatorname{rng}_{\Delta}(x)\,\TV(x).
\]
Integrating over the marginal distribution of $X$ gives
\[
\left|\theta_{\mathrm{NDE}}-\psi(P)\right|
\le
\E_P\!\left\{\operatorname{rng}_{\Delta}(X)\,\TV(X)\right\}.
\]
If $\operatorname{rng}_{\Delta}(X)\le \Gamma$ almost surely, then
\[
\left|\theta_{\mathrm{NDE}}-\psi(P)\right|
\le
\Gamma\,\E_P\{\TV(X)\},
\]
which is equivalent to the stated interval.

\section{Proofs of Section \ref{sec:psi}}

\subsection{Proof of Theorem~\ref{thm:psi}}

Under the confounder model, Proposition~\ref{prop:confounder-id} already shows that
\[
\psi(P)=\E_P\!\left\{\Delta(W,X)\right\}=\theta_{\mathrm{ATE}}.
\]

Under the mediator model, conditional on $X$, the stochastic intervention variable $G$ has law $F_{W\mid X}(\cdot\mid X)$ and is independent of the collection $\{Y^{(a,w)}:a\in\{0,1\},w\in\mathcal{W}\}$ given $X$. Therefore, for each $a\in\{0,1\}$,
\[
\E\!\left\{Y^{(a,G)}\mid X\right\}
=
\int \E\!\left\{Y^{(a,w)}\mid X\right\}\,\dd F_{W\mid X}(w\mid X).
\]
Under Assumption~\ref{ass:mediator}, the same argument used in the proof of Proposition~\ref{prop:nde-id} gives
\[
\E\!\left\{Y^{(a,w)}\mid X\right\}=Q(a,w,X).
\]
Hence
\[
\E\!\left\{Y^{(a,G)}\mid X\right\}
=
\int Q(a,w,X)\,\dd F_{W\mid X}(w\mid X).
\]
Subtracting the $a=0$ expression from the $a=1$ expression and taking expectations over $X$ yields
\[
\theta_{\mathrm{IDE}}
=
\E_P\!\left[\int \{Q(1,w,X)-Q(0,w,X)\}\,\dd F_{W\mid X}(w\mid X)\right]
=
\E_P\!\left\{\Delta(W,X)\right\}
=
\psi(P).
\qedhere
\]

\subsection{Proof of Theorem~\ref{thm:psi-unique}}

Fix an observed-data law $P$ satisfying Assumption~\ref{ass:positivity}, and let $P_{XAW}$ denote its marginal law on $(X,A,W)$. We will show that
\[
\nu_P(\cdot\mid X)=F_{W\mid X}(\cdot\mid X)
\qquad P\text{-almost surely}.
\]

Let $h$ be any bounded measurable function of $(W,X)$. Construct an auxiliary observed-data law $P^h$ by taking $(X,A,W)\sim P_{XAW}$ and setting
\[
Y:=A\,h(W,X).
\]
Because $P^h$ has the same $(X,A,W)$-margin as $P$, Assumption~\ref{ass:positivity} holds under $P^h$ as well. Moreover, $P^h$ is compatible with the confounder model: take $W$ to be pre-exposure and define
\[
Y^{(0)}=0,
\qquad
Y^{(1)}=h(W,X).
\]
Then $Y^{(a)}$ is measurable with respect to $(X,W)$, so $Y^{(a)}\indep A\mid (X,W)$ for $a\in\{0,1\}$, and Assumption~\ref{ass:confounder} holds. Under $P^h$,
\[
Q(1,w,x)=h(w,x),
\qquad
Q(0,w,x)=0,
\]
hence
\[
\Delta(w,x)=h(w,x).
\]

Since the rule $P\mapsto \nu_P$ is outcome-free and $P^h_{XAW}=P_{XAW}$, we have
\[
\nu_{P^h}(\cdot\mid x)=\nu_P(\cdot\mid x)
\]
for $P_X$-almost every $x$. By the assumed confounder model property of $\Psi_\nu$,
\[
\E_{P^h}\!\left[\int h(w,X)\,\dd \nu_P(w\mid X)\right]
=
\Psi_\nu(P^h)
=
\theta_{\mathrm{ATE}}
=
\E_{P^h}\!\left\{Y^{(1)}-Y^{(0)}\right\}
=
\E_{P^h}\!\left\{h(W,X)\right\}.
\]
Because $P^h$ and $P$ have the same marginal law of $(X,W)$, this is equivalent to
\[
\E_P\!\left[\int h(w,X)\,\dd \nu_P(w\mid X)\right]
=
\E_P\!\left\{h(W,X)\right\}
\]
for every bounded measurable $h$.

By the defining property of the conditional law $F_{W\mid X}$,
\[
\E_P\!\left\{h(W,X)\right\}
=
\E_P\!\left[\int h(w,X)\,\dd F_{W\mid X}(w\mid X)\right].
\]
Therefore,
\[
\E_P\!\left[\int h(w,X)\,\dd \nu_P(w\mid X)\right]
=
\E_P\!\left[\int h(w,X)\,\dd F_{W\mid X}(w\mid X)\right]
\]
for every bounded measurable $h$.

Now fix any measurable set $B \subseteq \mathcal{W}$ and any event $E \in \sigma(X)$. Taking
\[
h(w,x)=\mathbf{1}\{w\in B\}\mathbf{1}\{x\in E\},
\]
we obtain
\[
\E_P\!\left[\mathbf{1}\{X\in E\}\,\nu_P(B\mid X)\right]
=
\E_P\!\left[\mathbf{1}\{X\in E\}\,F_{W\mid X}(B\mid X)\right].
\]
Equivalently, with
\[
D_B(X):=\nu_P(B\mid X)-F_{W\mid X}(B\mid X),
\]
we have
\[
\E_P\!\left[\mathbf{1}\{X\in E\}\,D_B(X)\right]=0
\qquad\text{for every }E\in \sigma(X).
\]
Since $D_B(X)$ is $\sigma(X)$-measurable, choosing
\[
E=\{D_B(X)>0\}
\qquad\text{and}\qquad
E=\{D_B(X)<0\}
\]
shows that $D_B(X)=0$ almost surely. Hence, for every measurable $B\subseteq\mathcal{W}$,
\[
\nu_P(B\mid X)=F_{W\mid X}(B\mid X)
\qquad P\text{-almost surely}.
\]
Thus $\nu_P(\cdot\mid X)$ is a version of the conditional law of $W$ given $X$; by uniqueness of regular conditional distributions up to $P_X$-null sets,
\[
\nu_P(\cdot\mid X)=F_{W\mid X}(\cdot\mid X)
\qquad P\text{-almost surely}.
\]

Substituting this identity into the definition of $\Psi_\nu(P)$ yields
\[
\Psi_\nu(P)
=
\E_P\!\left[\int \Delta(w,X)\,\dd F_{W\mid X}(w\mid X)\right]
=
\E_P\!\left\{\Delta(W,X)\right\}
=
\psi(P).
\]
Because $P$ was arbitrary, the conclusion holds for every observed-data law satisfying Assumption~\ref{ass:positivity}.

\subsection{Proof of Theorem~\ref{thm:eif}}

Let $Z=(W,X)$. Define
\[
Q_a(z):=\E_P(Y\mid A=a,Z=z), \qquad g(z):=\Prob(A=1\mid Z=z),
\]
and note that
\[
\psi(P)=\E_P\!\left\{Q_1(Z)-Q_0(Z)\right\}.
\]

Consider a regular one-dimensional parametric submodel $\{P_{\varepsilon}:\varepsilon\in(-\delta,\delta)\}$ through $P=P_0$ with score $S(O)$. Factor the submodel density as
\[
p_{\varepsilon}(o)
=
p_{\varepsilon}(y\mid a,z)\,
p_{\varepsilon}(a\mid z)\,
p_{\varepsilon}(z).
\]
Write
\[
S(O)=S_Y(O)+S_A(O)+S_Z(O),
\]
where each term is the score for the corresponding factor.

Let $m_{\varepsilon}(z)=Q_{1,\varepsilon}(z)-Q_{0,\varepsilon}(z)$. Then
\[
\psi(P_{\varepsilon})=\E_{\varepsilon}\!\left\{m_{\varepsilon}(Z)\right\},
\]
so the derivative at $\varepsilon=0$ is
\[
\left.\frac{d}{d\varepsilon}\psi(P_{\varepsilon})\right|_{\varepsilon=0}
=
\E\!\left[m(Z)S_Z(O)\right]
+
\E\!\left\{\dot m(Z)\right\},
\]
where a dot denotes differentiation at $\varepsilon=0$ and $m(z)=Q_1(z)-Q_0(z)$.

For $a=1$,
\[
\dot Q_1(z)=\E\!\left[\{Y-Q_1(z)\}S_Y(O)\mid A=1,Z=z\right],
\]
and therefore
\[
\E\!\left\{\dot Q_1(Z)\right\}
=
\E\!\left[\frac{\one\{A=1\}}{g(Z)}\{Y-Q_1(Z)\}S_Y(O)\right].
\]
Similarly,
\[
\E\!\left\{\dot Q_0(Z)\right\}
=
\E\!\left[\frac{\one\{A=0\}}{1-g(Z)}\{Y-Q_0(Z)\}S_Y(O)\right].
\]
Thus
\begin{align*}
\left.\frac{d}{d\varepsilon}\psi(P_{\varepsilon})\right|_{\varepsilon=0}
=
\E\Bigg[
&\left\{
\frac{\one\{A=1\}}{g(Z)}(Y-Q_1(Z))
-
\frac{\one\{A=0\}}{1-g(Z)}(Y-Q_0(Z))
\right\}S_Y(O)
\\
&\qquad\qquad
+\{m(Z)-\psi(P)\}S_Z(O)
\Bigg].
\end{align*}
Because $\psi(P)$ does not depend on the conditional law of $A$ given $Z$, the derivative in the $S_A$ direction is $0$. Also,
\[
\E\!\left[\varphi_{\psi}(O)\mid Z\right]=m(Z)-\psi(P),
\]
so $\E\{\varphi_{\psi}(O)S_A(O)\}=0$. Hence
\[
\left.\frac{d}{d\varepsilon}\psi(P_{\varepsilon})\right|_{\varepsilon=0}
=
\E\!\left\{\varphi_{\psi}(O)S(O)\right\},
\]
with $\varphi_{\psi}(O)$ as given in \eqref{eq:eif}. Since the nonparametric tangent space is the full mean-zero $L_2(P)$ space, this influence function is the canonical gradient, i.e., the efficient influence function.

\subsection{A Remainder Formula for the One-Step Score}

Write $Q_{a,0}(w,x):=Q(a,w,x)$ for $a\in\{0,1\}$ and
\(
g_0(w,x):=\Prob(A=1\mid W=w,X=x)
\)
for the true nuisance functions, and let
\(
\eta_0:=(Q_{1,0},Q_{0,0},g_0).
\)
For any candidate nuisance collection $\eta=(Q_1,Q_0,g)$ such that $0<g(W,X)<1$ almost surely, define
\begin{align*}
\varphi_{\eta}(O)
:=
&\frac{\one\{A=1\}}{g(W,X)}\left\{Y-Q_1(W,X)\right\}
-
\frac{\one\{A=0\}}{1-g(W,X)}\left\{Y-Q_0(W,X)\right\}
\\
&\qquad
+Q_1(W,X)-Q_0(W,X)-\psi(P).
\end{align*}
Note that $\varphi_{\eta_0}$ is exactly the efficient influence function in \eqref{eq:eif}. Also, throughout the proof, we use the shorthand
\(
Pf := \E_P\{f(O)\}.
\)

\begin{lemma}[Second-order remainder]\label{lem:remainder}
For any nuisance functions $\eta=(Q_1,Q_0,g)$,
\begin{align*}
P\varphi_{\eta}
=
&
P\!\left[
\frac{g_0(W,X)-g(W,X)}{g(W,X)}
\{Q_{1,0}(W,X)-Q_1(W,X)\}
\right]
\\
&+
P\!\left[
\frac{g_0(W,X)-g(W,X)}{1-g(W,X)}
\{Q_{0,0}(W,X)-Q_0(W,X)\}
\right].
\end{align*}
\end{lemma}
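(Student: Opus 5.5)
The approach is a direct computation of $P\varphi_{\eta}$ by iterated expectations: first condition on $(A,W,X)$ to remove $Y$, then condition on $(W,X)$ to remove $A$. After that, everything reduces to an expectation over $(W,X)$, which can be regrouped against $\psi(P)=P\{Q_{1,0}(W,X)-Q_{0,0}(W,X)\}$. Here $\psi(P)$ is the true parameter value appearing in the definition of $\varphi_\eta$, and the identity for $\psi(P)$ is the one noted just before Theorem~\ref{thm:eif}. Throughout I abbreviate $Z=(W,X)$ and suppress arguments.

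\emph{Step 1: remove $Y$.} Since $\one\{A=1\}/g(Z)$ is a function of $(A,Z)$, conditioning on $(A,Z)$ gives
\[
P\Bigl[\tfrac{\one\{A=1\}}{g}(Y-Q_1)\Bigr]=P\Bigl[\tfrac{\one\{A=1\}}{g}(Q_{1,0}-Q_1)\Bigr].
\]
This uses $\E(Y\mid A=1,Z)=Q_{1,0}(Z)$ on $\{A=1\}$. The $A=0$ term is handled in the same way with $Q_{0,0}$.

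\emph{Step 2: remove $A$.} Conditioning on $Z$ and using $\E(\one\{A=1\}\mid Z)=g_0(Z)$, the first term becomes $P[(g_0/g)(Q_{1,0}-Q_1)]$. Likewise, the second term becomes $-P[\{(1-g_0)/(1-g)\}(Q_{0,0}-Q_0)]$.

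\emph{Step 3: regroup.} Write the remaining terms as
\[
P(Q_1-Q_0)-\psi(P)=-P(Q_{1,0}-Q_1)+P(Q_{0,0}-Q_0).
\]
Pair each piece with the matching weighted term from Step 2. For the treated arm,
\[
\frac{g_0}{g}-1=\frac{g_0-g}{g}.
\]
For the control arm,
\[
-\frac{1-g_0}{1-g}+1=\frac{(1-g)-(1-g_0)}{1-g}=\frac{g_0-g}{1-g}.
\]
These two factors produce exactly the two products stated in Lemma~\ref{lem:remainder}.

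The only point needing care is integrability, so that every expectation above is finite and the conditioning steps are legitimate. I would handle it by assuming, as in Theorem~\ref{thm:dr}, that $g$ is bounded away from $0$ and $1$ and that $Y$, $Q_a$ and $Q_{a,0}$ lie in $L_2(P)$. Under these conditions all terms are integrable, and the linear rearrangements in Step 3 are valid. The algebra itself is routine; the main thing to check is the sign bookkeeping in the control arm, which is where a sign error would most easily occur.
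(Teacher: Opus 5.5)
Your proposal is correct and follows essentially the same route as the paper: condition on $(W,X)$ to turn the weighted residual terms into $\frac{g_0}{g}(Q_{1,0}-Q_1)$ and $\frac{1-g_0}{1-g}(Q_{0,0}-Q_0)$, then pair each with the matching piece of $P(Q_1-Q_0)-\psi(P)$, and your control-arm sign bookkeeping is right. The integrability conditions you add ($g$ bounded away from $0$ and $1$, square-integrable $Y$ and regressions) make explicit what the paper leaves implicit and are a sensible supplement.
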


\begin{proof}
Since
\[
\psi(P)=P\!\left\{Q_{1,0}(W,X)-Q_{0,0}(W,X)\right\},
\]
it suffices to compare the candidate nuisance functions with the truth.

For the $a=1$ term,
\[
P\!\left[
\frac{\one\{A=1\}}{g(W,X)}\{Y-Q_1(W,X)\}
\,\middle|\, W,X
\right]
=
\frac{g_0(W,X)}{g(W,X)}
\{Q_{1,0}(W,X)-Q_1(W,X)\}.
\]
Therefore,
\begin{align*}
&P\!\left[
\frac{\one\{A=1\}}{g(W,X)}\{Y-Q_1(W,X)\}
+
Q_1(W,X)-Q_{1,0}(W,X)
\right]
\\
&\qquad=
P\!\left[
\left\{\frac{g_0(W,X)}{g(W,X)}-1\right\}
\{Q_{1,0}(W,X)-Q_1(W,X)\}
\right]
\\
&\qquad=
P\!\left[
\frac{g_0(W,X)-g(W,X)}{g(W,X)}
\{Q_{1,0}(W,X)-Q_1(W,X)\}
\right].
\end{align*}

For the $a=0$ term,
\[
P\!\left[
\frac{\one\{A=0\}}{1-g(W,X)}\{Y-Q_0(W,X)\}
\,\middle|\, W,X
\right]
=
\frac{1-g_0(W,X)}{1-g(W,X)}
\{Q_{0,0}(W,X)-Q_0(W,X)\}.
\]
Hence,
\begin{align*}
&P\!\left[
-\frac{\one\{A=0\}}{1-g(W,X)}\{Y-Q_0(W,X)\}
-Q_0(W,X)+Q_{0,0}(W,X)
\right]
\\
&\qquad=
-
P\!\left[
\frac{1-g_0(W,X)}{1-g(W,X)}
\{Q_{0,0}(W,X)-Q_0(W,X)\}
\right]
+
P\!\left\{Q_{0,0}(W,X)-Q_0(W,X)\right\}
\\
&\qquad=
P\!\left[
\left\{1-\frac{1-g_0(W,X)}{1-g(W,X)}\right\}
\{Q_{0,0}(W,X)-Q_0(W,X)\}
\right]
\\
&\qquad=
P\!\left[
\frac{g_0(W,X)-g(W,X)}{1-g(W,X)}
\{Q_{0,0}(W,X)-Q_0(W,X)\}
\right].
\end{align*}

Adding the two identities proves the claim.

\end{proof}

\subsection{Proof of Theorem~\ref{thm:dr}}

For each fold $k\in\{1,\ldots,K\}$, let $n_k:=|\mathcal I_k|$ and define the fold-specific empirical measure
\[
\mathbb P_{n,k}f
:=
\frac{1}{n_k}\sum_{i\in\mathcal I_k} f(O_i).
\]
Also write
\[
\widehat\eta^{(-k)}
:=
\bigl(\widehat Q^{(-k)}_1,\widehat Q^{(-k)}_0,\widehat g^{(-k)}\bigr),
\]
where $\widehat Q^{(-k)}_a(w,x):=\widehat Q^{(-k)}(a,w,x)$ for $a\in\{0,1\}$.
Then, by definition of the cross-fitted estimator,
\[
\widehat\psi_{\mathrm{cf}}-\psi(P)
=
\sum_{k=1}^K \frac{n_k}{n}\,\mathbb P_{n,k}\varphi_{\widehat\eta^{(-k)}}.
\]

Let
\[
\mathcal E_n
:=
\left\{
\varepsilon \le \widehat g^{(-k)}(w,x)\le 1-\varepsilon
\text{ for all }(w,x)\text{ and all }k=1,\ldots,K
\right\}.
\]
By assumption, $\Prob(\mathcal E_n)\to 1$.

\medskip
\noindent
\textit{Proof of part (i).}
Write
\[
\widehat\psi_{\mathrm{cf}}-\psi(P)
=
\sum_{k=1}^K R_{n,k}
+
\sum_{k=1}^K B_{n,k},
\]
where
\[
R_{n,k}
:=
\frac{n_k}{n}\,(\mathbb P_{n,k}-P)\varphi_{\widehat\eta^{(-k)}},
\qquad
B_{n,k}
:=
\frac{n_k}{n}\,P\varphi_{\widehat\eta^{(-k)}}.
\]

Fix $k$. Conditional on the training sample
\[
\mathcal D_{-k}:=\{O_j:j\in\mathcal I_{-k}\},
\]
the nuisance estimators $\widehat\eta^{(-k)}$ are fixed, and the validation observations
$\{O_i:i\in\mathcal I_k\}$ are independent of $\mathcal D_{-k}$. Therefore,
\[
\E\!\left(R_{n,k}\mid \mathcal D_{-k}\right)=0
\]
and
\[
\E\!\left(R_{n,k}^2\mid \mathcal D_{-k}\right)
=
\frac{n_k}{n^2}
\Var_P\!\left\{\varphi_{\widehat\eta^{(-k)}}(O)\mid \mathcal D_{-k}\right\}
\le
\frac{n_k}{n^2}
P\!\left\{\varphi_{\widehat\eta^{(-k)}}(O)^2\right\}.
\]
On the event $\mathcal E_n$, the denominators in $\varphi_{\widehat\eta^{(-k)}}$ are uniformly bounded away from $0$ and $1$. Under the standing square-integrability conditions and the assumed $L_2(P)$ boundedness of the nuisance estimators, this implies
\[
P\!\left\{\varphi_{\widehat\eta^{(-k)}}(O)^2\right\}=O_P(1).
\]
Since $K$ is fixed and $n_k\asymp n$, it follows that
\[
R_{n,k}=o_P(1)
\qquad\text{for each }k,
\]
and hence
\[
\sum_{k=1}^K R_{n,k}=o_P(1).
\]

Next, on $\mathcal E_n$, Lemma~\ref{lem:remainder} and the Cauchy--Schwarz inequality give
\begin{align*}
\left|P\varphi_{\widehat\eta^{(-k)}}\right|
\le\;&
\left\|
\frac{g_0-\widehat g^{(-k)}}{\widehat g^{(-k)}}
\right\|_{L_2(P)}
\left\|Q_{1,0}-\widehat Q^{(-k)}_1\right\|_{L_2(P)}
\\
&+
\left\|
\frac{g_0-\widehat g^{(-k)}}{1-\widehat g^{(-k)}}
\right\|_{L_2(P)}
\left\|Q_{0,0}-\widehat Q^{(-k)}_0\right\|_{L_2(P)}
\\
\le\;&
\varepsilon^{-1}
\left\|\widehat g^{(-k)}-g_0\right\|_{L_2(P)}
\left(
\left\|Q_{1,0}-\widehat Q^{(-k)}_1\right\|_{L_2(P)}
+
\left\|Q_{0,0}-\widehat Q^{(-k)}_0\right\|_{L_2(P)}
\right).
\end{align*}
If the outcome regressions are consistent, then the last display is $o_P(1)$ because
\[
\left\|\widehat g^{(-k)}-g_0\right\|_{L_2(P)} \le 1.
\]
If instead $\widehat g^{(-k)}$ is consistent and the outcome regressions are uniformly bounded in $L_2(P)$, then the same display is again $o_P(1)$. Therefore,
\[
\sum_{k=1}^K B_{n,k}=o_P(1),
\]
and thus
\[
\widehat\psi_{\mathrm{cf}} \overset{P}{\longrightarrow} \psi(P).
\]

\medskip
\noindent
\textit{Proof of part (ii).}
Now write
\[
\widehat\psi_{\mathrm{cf}}-\psi(P)
=
\sum_{k=1}^K \frac{n_k}{n}(\mathbb P_{n,k}-P)\varphi_{\eta_0}
+
\sum_{k=1}^K S_{n,k}
+
\sum_{k=1}^K B_{n,k},
\]
where
\[
S_{n,k}
:=
\frac{n_k}{n}(\mathbb P_{n,k}-P)
\left\{
\varphi_{\widehat\eta^{(-k)}}-\varphi_{\eta_0}
\right\}.
\]
Because the folds partition the sample,
\[
\sum_{k=1}^K \frac{n_k}{n}(\mathbb P_{n,k}-P)\varphi_{\eta_0}
=
(\mathbb P_n-P)\varphi_{\eta_0}.
\]

We first show that the stochastic remainder $\sum_{k=1}^K S_{n,k}$ is negligible. Fix $k$. Conditional on $\mathcal D_{-k}$,
\[
\E\!\left(S_{n,k}\mid \mathcal D_{-k}\right)=0
\]
and
\[
\E\!\left(nS_{n,k}^2\mid \mathcal D_{-k}\right)
=
\frac{n_k}{n}
\Var_P\!\left(
\varphi_{\widehat\eta^{(-k)}}(O)-\varphi_{\eta_0}(O)
\,\middle|\, \mathcal D_{-k}
\right)
\le
\frac{n_k}{n}
P\!\left[
\left\{\varphi_{\widehat\eta^{(-k)}}(O)-\varphi_{\eta_0}(O)\right\}^2
\right].
\]
On $\mathcal E_n$, the map $\eta\mapsto \varphi_\eta$ is continuous in $L_2(P)$ under the standing square-integrability conditions. Hence the assumed $L_2(P)$ consistency of $\widehat Q^{(-k)}_a$ and $\widehat g^{(-k)}$ implies
\[
P\!\left[
\left\{\varphi_{\widehat\eta^{(-k)}}(O)-\varphi_{\eta_0}(O)\right\}^2
\right]
=
o_P(1).
\]
Therefore,
\[
\sqrt{n}\,S_{n,k}=o_P(1)
\qquad\text{for each }k,
\]
and, since $K$ is fixed,
\[
\sum_{k=1}^K S_{n,k}=o_P(n^{-1/2}).
\]

Next, by the bound above and the product-rate assumption,
\[
\max_{1\le k\le K}
\left|P\varphi_{\widehat\eta^{(-k)}}\right|
=
o_P(n^{-1/2}),
\]
so
\[
\sum_{k=1}^K B_{n,k}=o_P(n^{-1/2}).
\]

Combining the displays,
\[
\widehat\psi_{\mathrm{cf}}-\psi(P)
=
(\mathbb P_n-P)\varphi_{\eta_0}
+
o_P(n^{-1/2}).
\]
Since $\varphi_{\eta_0}=\varphi_\psi$ and $\Var_P\{\varphi_\psi(O)\}<\infty$, the central limit theorem yields
\[
\sqrt{n}\left\{\widehat\psi_{\mathrm{cf}}-\psi(P)\right\}
\overset{d}{\longrightarrow}
N\!\left(0,\Var_P\{\varphi_\psi(O)\}\right).
\]

Finally, for $i\in\mathcal I_k$, the quantity
\[
\widehat\phi^{\mathrm{cf}}_i-\widehat\psi_{\mathrm{cf}}
\]
is the fold-specific plug-in estimate of the centered influence function. By the same $L_2(P)$-consistency argument used above,
\[
\frac{1}{n}\sum_{k=1}^K\sum_{i\in\mathcal I_k}
\left[
\left(\widehat\phi^{\mathrm{cf}}_i-\widehat\psi_{\mathrm{cf}}\right)
-
\varphi_\psi(O_i)
\right]^2
=
o_P(1).
\]
Therefore, the sample variance of the estimated influence values is consistent for
\[
P\!\left\{\varphi_\psi(O)^2\right\}
=
\Var_P\{\varphi_\psi(O)\},
\]
and the stated standard-error estimator is asymptotically valid.

\subsection{Proof of Proposition~\ref{prop:union-test}}

Under $H_0^{\cup}$, either $(\mathcal{C},\theta_{\mathrm{ATE}}=0)$ holds or $(\mathcal{M},\theta_{\mathrm{NDE}}=0)$ holds.

If $(\mathcal{C},\theta_{\mathrm{ATE}}=0)$ holds, then validity of $p_C$ gives
\[
\Prob(p_C<\alpha)\le \alpha.
\]
Hence
\[
\Prob(p_{\max}<\alpha)
=
\Prob(p_C<\alpha,\ p_M<\alpha)
\le
\Prob(p_C<\alpha)
\le
\alpha.
\]

If $(\mathcal{M},\theta_{\mathrm{NDE}}=0)$ holds, the same argument using $p_M$ shows
\[
\Prob(p_{\max}<\alpha)\le \alpha.
\]

Therefore the test that rejects when $p_{\max}<\alpha$ has size at most $\alpha$ under the union null.

\section{Survey-Weighted Implementation and PSU Bootstrap for the NHANES Application}
\label{app:nhanes-weights}

The theory in Sections~\ref{sec:desc}--\ref{sec:psi} is developed for independent and identically distributed observations from an observed-data law $P$. NHANES, however, is a stratified multistage probability sample, and serum PFAS are measured only in a one-third subsample of participants aged 12 years or older \citep{nhanes_pfas_2015,nhanes_sample_design}. The application in Section~\ref{sec:application} therefore uses a survey-weighted analogue of the proposed estimator.

Let $\omega_i^{(2)}$ denote the PFAS 2-year subsample weight supplied by NHANES for participant $i$. Because we combined the 2013--2014, 2015--2016, and 2017--2018 cycles, we formed the 6-year PFAS weight
\[
\omega_i^{(6)}:=\frac{1}{3}\omega_i^{(2)},
\]
following the CDC rule for combining post-2001 NHANES cycles \citep{nhanes_weighting_tutorial}. For any measurable function $f$, define the H\'ajek-weighted empirical mean
\[
P_n^{w}f
:=
\frac{\sum_{i=1}^n \omega_i^{(6)} f(O_i)}
{\sum_{i=1}^n \omega_i^{(6)}}.
\]
If the design weights are correctly specified, then $P_n^{w}$ targets the survey-weighted target law rather than the unweighted empirical law. In this sense, the application can be viewed as estimating
\[
\psi(P^\star)
=
E_{P^\star}\!\left\{Q^\star(1,W,X)-Q^\star(0,W,X)\right\},
\]
where $P^\star$ denotes the population law represented by the weighted NHANES sample.

The survey-weighted analogue of the proposed one-step estimator is
\[
\widehat\psi_{\mathrm{cf},w}
=
P_n^{w}\!\left[
\frac{A}{\widehat g}(Y-\widehat Q_1)
-
\frac{1-A}{1-\widehat g}(Y-\widehat Q_0)
+
\widehat Q_1-\widehat Q_0
\right],
\]
where, for an observation in validation fold $k$, $\widehat Q_a := \widehat Q^{(-k)}(a,W,X)$ and $\widehat g := \widehat g^{(-k)}(W,X)$ denote the held-out predictions obtained from the corresponding training folds, as in Section~\ref{sec:IF}. Within each training fold, the proposed estimator $\widehat\psi_{\mathrm{cf},w}$ fits only the outcome regression and the exposure regression, using normalized survey weights to stabilize computation while preserving the weighted estimating equations. The mediator regression is not part of $\widehat\psi_{\mathrm{cf},w}$; it is fit separately only for the mediation model comparator $\widehat\theta_{\mathrm{NDE},w}$.

For $\widehat\psi_{\mathrm{cf},w}$, we report a design-based 95\% confidence interval that uses the NHANES strata and primary sampling units (PSUs). After computing the held-out one-step score for each participant, we treat those scores as the analysis variable and take their survey-weighted mean. We then estimated the standard error of the survey-weighted mean of the held-out one-step scores with the standard NHANES Taylor series linearization variance estimator, using the released masked variance strata (\texttt{SDMVSTRA}) and masked variance PSUs (\texttt{SDMVPSU}) together with the PFAS subsample weights. This choice is motivated by the i.i.d. asymptotic theory of Theorem~\ref{thm:dr}: in the i.i.d. setting, the proposed cross-fitted one-step estimator is asymptotically equivalent to an average of influence-function score contributions, so its leading variance is driven by the variability of those scores, while nuisance estimation error contributes only a higher-order remainder.

For the mediation-model comparator, we do not report a corresponding design-based 95\% confidence interval. The comparator is computed by first fitting the outcome and mediator regressions and then averaging the resulting fitted contrasts. A simple design-based variance calculation for that weighted mean would treat the fitted regressions as if they were fixed, and would therefore ignore the extra uncertainty created by estimating those regressions for nuisance functions. For this reason, we use a within-stratum PSU bootstrap for the comparator. In each bootstrap replicate, we resample PSUs with replacement within strata, refit the outcome and mediator regressions, and recompute the comparator. We use the same bootstrap strategy for $\widehat\psi_{\mathrm{cf},w}$, except that each replicate refits the outcome and exposure regressions before recomputing the one-step estimator.

Specifically, let $h$ index strata and let $j=1,\dots,m_h$ index the observed PSUs within stratum $h$. In bootstrap replicate $b$, we sample $m_h$ PSUs with replacement from the $m_h$ observed PSUs in each stratum. Let $C_{hj}^{(b)}$ denote the number of times PSU $j$ is selected in replicate $b$. The bootstrap survey weight for unit $i$ is then
\[
\omega_i^{(6,b)}
=
C_{h(i)j(i)}^{(b)}\,\omega_i^{(6)},
\]
where $h(i)$ and $j(i)$ denote the stratum and PSU containing unit $i$. Using these replicate weights, we refit the outcome and exposure regressions and recompute the survey-weighted one-step estimator, denoted $\widehat\psi_{\mathrm{cf},w}^{(b)}$, and we refit the outcome and mediator regressions and recompute the mediation model comparator, denoted $\widehat\theta_{\mathrm{NDE},w}^{(b)}$. Percentile 95\% confidence intervals were obtained from the empirical 2.5th and 97.5th percentiles of the bootstrap replicates. The same bootstrap draws were also used to form a confidence interval for $\widehat\psi_{\mathrm{cf},w}-\widehat\theta_{\mathrm{NDE},w}$.

\end{document}